\begin{document}
	
	\title{Recognizing Graph Search Trees\thanks{The work of this paper was done in the framework of a bilateral project between Brandenburg University of Technology and University of Primorska, financed by German Academic Exchange Service and the Slovenian Research Agency (BI-DE/17-19-18).}}
	
	\author[1]{Jesse Beisegel}
	\author[1]{Carolin Denkert}
	\author[1]{Ekkehard K\"ohler}
	\author[2,3]{Matja\v{z} Krnc\thanks{The author gratefully acknowledge the European Commission for funding the InnoRenew CoE project (Grant Agreement \#739574) under the Horizon2020 Widespread-Teaming program and the Republic of Slovenia (Investment funding of the Republic of Slovenia and the European Union of the European Regional Development Fund).}}
	\author[2]{Nevena Piva\v{c}\thanks{Funded in part by the Slovenian Research Agency (research programs P1-0285 and J1-9110 and Young Researchers Grant).}}
	\author[1]{Robert Scheffler}
	\author[1]{Martin Strehler}
	
	\affil[1]{\normalsize Brandenburg University of Technology, Cottbus, Germany}
	\affil[2]{\normalsize University of Primorska, Koper, Slovenia}
  	\affil[3]{\normalsize Faculty of Information Studies, Novo mesto, Slovenia}

	\date{}

  	\maketitle

 	\begin{abstract}
		Graph searches and the corresponding search trees can exhibit important structural properties and are used in various graph algorithms. The problem of deciding whether a given spanning tree of a graph is a search tree of a particular search on this graph was introduced by Hagerup and Nowak in 1985, and independently by Korach and Ostfeld in 1989 where the authors showed that this problem is efficiently solvable for DFS trees. A linear time algorithm for BFS trees was obtained by Manber in 1990. In this paper we prove that the search tree problem is also in \P~for LDFS, in contrast to LBFS, MCS, and MNS, where we show \NP-completeness. We complement our results by providing linear time algorithms for these searches on split graphs. 
	\end{abstract} 	
	
	% !TeX spellcheck = en_US
\section{Introduction}

\paragraph*{Motivation.}

Graph searches like \emph{Breadth First Search} (BFS) and \emph{Depth First Search} (DFS) are, in the most general sense, mechanisms for systematically visiting all vertices of a graph. Considered as some of the most basic algorithms in computer science, graph searches are taught in many undergraduate courses around the world and represent an elementary component of several graph algorithms, such as finding connected components, testing for bipartiteness, computing shortest paths with respect to the number of edges, or the Edmonds-Karp algorithm for computing the maximum flow in a network~\cite{edmonds1972theoretical}. Similarly, DFS is the basis for algorithms for finding biconnected components in undirected graphs~\cite{hopcroft1973algorithm}, strongly connected components in directed graphs~\cite{tarjan1972depth}, topological orderings of directed acyclic graphs~\cite{tarjan1976edge}, planarity testing~\cite{hopcroft1974efficient}, or solving mazes~\cite{even2011book}.

We focus on connected searches, that is, a graph search or graph traversal that starts at a vertex and explores the graph by visiting a vertex in the neighborhood of the already visited vertices. If no further restriction is given, we call such a search a \emph{generic search}. The search paradigms of BFS and DFS can be simply characterized by using a queue or a stack as the data structure for the  unvisited vertices in the current neighborhood. However, there are more sophisticated searches like \emph{Lexicographic Breadth First Search} (LBFS)~\cite{rose1976}  and \emph{Lexicographic Depth First Search} (LDFS)~\cite{corneil2008unified}. In this article, we also consider \emph{Maximum Cardinality Search} (MCS)~\cite{tarjan1984simple} and \emph{Maximum Neighborhood Search} (MNS)~\cite{corneil2008unified}.

Usually, the outcome of a graph search is a \emph{search order}, i.e., a sequence of the vertices in the order they are visited. There are many known results and algorithms that are based on graph search orders. For instance, a perfect elimination order of a chordal graph can be found by reversing an LBFS order on that graph~\cite{rose1976}. Apart from a linear recognition algorithm for chordal graphs, LBFS also yields a greedy coloring algorithm for finding a minimum coloring for this graph class~\cite{golumbic2004book}. Furthermore, it is possible to generate characterizing vertex orderings for AT-free graphs using BFS~\cite{beisegel2018characterising}.

A structure that is closely related to a graph search is the corresponding search tree. Such trees can be of particular interest, as for instance the tree obtained by a BFS contains the shortest paths from the root $r$ to all other vertices in the graph. The trees generated by DFS can be used for fast planarity testing of graphs~\cite{hopcroft1974efficient}. Moreover, if a cocomparability graph has hamiltonian path, then such a path can be found by a combination of various graph searches~\cite{corneil2013}. First, one can use at most $n$ LBFS runs, where $n$ is the number of vertices, to find a cocomparability ordering~\cite{dusart2017new}. Afterwards, the last visited vertex of an LDFS on this cocomparability ordering is the first vertex of a hamiltonian path. Finally, the search tree of a right most neighbor search on the LDFS ordering is a hamiltonian path.

So far, there is no satisfactory answer as to why graph searching works so well. An interesting example are multi-sweep algorithms, such as finding dominating pairs in connected asteroidal triple-free graphs~\cite{corneil99}. One can prove that these algorithms are correct. However, it is not clear why multiple runs of a simple algorithm could give such a strong insight into graph structure. Indeed, there seem to be some hidden structural properties of graph searches, which are waiting for discovery and algorithmic exploitation.

As a step in this direction, we study the problem of whether a given tree can be a search tree of a particular search. For BFS-like searches, one usually connects each vertex $v\in V$ to its neighbor which appeared first in the BFS order. Contrary, for DFS-like searches, one connects each vertex $v\in V$ to the last neighbor visited before $v$. However, there is no such obvious definition of a tree for MCS or MNS. Therefore, we define \cf- and \cl-trees: Given an ordering, in an \cf-tree each vertex $v$ is connected to its neighbor which appeared first in the ordering before $v$, whereas in an \cl-tree each vertex is connected to its neighbor which appeared last before $v$. A proper definition will be given in Section~\ref{sec:search:tree}. This motivates the following decision problem:

\begin{problem}{\cf-Tree (\cl-Tree) Recognition Problem}
	Instance: & A connected graph $ G=(V,E) $ and a spanning tree $ T $. \\ 
	Task: & Decide whether there is a graph search of the given type such that $ T $ is \\
		  & its \cf-tree (\cl-tree) of $ G $.
\end{problem}

\paragraph*{Related work.}

Already in 1972, Tarjan~\cite{tarjan1972depth} gave a complete characterization of DFS trees as so-called palm trees. However, no algorithm that determines if a given spanning tree of a graph $G$ is a DFS tree of $G$ was specified in that work. Using the concept of palm trees, Hopcroft and Tarjan developed a linear time algorithm for testing planarity of a graph~\cite{hopcroft1974efficient}. Exploiting properties of DFS and BFS trees, the problem of checking whether a given spanning tree of $G$ can be obtained by a DFS on $G$ was formulated by Hagerup and Novak ~\cite{hagerup1985}. A few years later, Korach and Ostfeld gave a linear time algorithm for the proposed problem of recognition of DFS-trees \cite{korach1989}. A similar result for the recognition of BFS-trees was given by Manber in 1990~\cite{manber1990}.

A problem that is closely related to the search tree recognition problem is the so-called end-vertex problem, i.e., the problem of determining whether a given vertex $v$ in a graph $G$ can be visited last by some graph search method. As a result of numerous new applications in algorithms, the end-vertex problem has received some attention in recent literature. In particular, the end-vertex of an LBFS on a chordal graph is always simplicial~\cite{rose1976}. Furthermore, in a cocomparability graph, the end-vertex of an LBFS is a source/sink in some transitive orientation of its complement~\cite{habib2000}. End-vertices are of particular interest for multi-sweep algorithms, as every consecutive search starts at the end vertex of the previous search. Here, LBFS provides a linear time algorithm for finding dominating pairs in connected asteroidal triple-free graphs, where a dominating pair is a pair of vertices such that every path connecting them is a dominating set in the graph ~\cite{corneil99}. The first vertex $x$ is simply the end-vertex of an arbitrary LBFS and the second vertex $y$ is the end-vertex of an LBFS starting in $x$. Moreover, one can use five LBFS executions followed by a modified LBFS to recognize interval graphs~\cite{corneil2009lbfs}. Crescenzi et al.~\cite{crescenzi2013} have shown that the diameter of huge real world graphs can usually be found with only a few BFS executions.

Surprisingly, the problem of deciding whether a vertex can be an end-vertex of a graph search is hard. In 2010, Corneil, K\"ohler, and Lanlignel~\cite{corneil2010end} have shown that it is \NP-hard to decide whether a vertex can be the end vertex of an LBFS. Later, Charbit, Habib, and Mamcarz generalized this result to BFS, DFS, and LDFS. Furthermore, they extended these results to several graph classes. Recently, Beisegel et~al.~\cite{beisegel2018end-vertex} proved \NP-hardness results for MCS and MNS, and they also provided linear time algorithms for this problem on split graphs and unit interval graphs. 

\paragraph*{Our contribution.}

Although research initially began with the recognition of search trees, the results on the end-vertex problem are currently more extensive. In the light of the new results on the end-vertex problem, we fill in the gaps in the analysis of the complexity of the search tree recognition problem. In this paper, we extend the tree recognition problem to LBFS, LDFS, MCS, and MNS for \cf- or \cl-trees, respectively, by showing \NP-hardness results for most of these searches on general graphs, a polynomial time recognition algorithm for \cl-trees of LDFS on general graphs, and linear time algorithms for the $\cf$-tree and the $\cl$-tree problem on split graphs for various searches. Table~\ref{tab:results} summarizes the known and some of the new results.

%tree recogn table

\begin{table}[ht!]
	\centering\small
	\begin{tabular}{l c c c c c c c c c c c c}
		\toprule
		Tree results         &     \cf-BFS                 &  \cf-LBFS              &  \cl-DFS & \cl-LDFS         & \cf-MCS             & \cf-MNS                              \\ \midrule
		All Graphs           &     L \cite{manber1990} &  \bfseries{NPC}    &  L \cite{hagerup1985,
			korach1989}  & \bfseries{P}  & \bfseries{NPC}  & \bfseries{NPC} \\ 
		Weakly Chordal       &     L                   &  \bfseries{NPC} &  L                    & \bfseries{P}  & \bfseries{NPC}               & \bfseries{NPC}              \\
		Chordal              &     L                   &  ?   & L                    & \bfseries{P}  & ?               & ?              \\
		Split                &     L                   &  \bfseries{L}   &  L                    & \bfseries{P}  & \bfseries{L}    & \bfseries{L}   \\ \bottomrule \addlinespace
	\end{tabular}
	\caption{Complexity of the tree recognition problem. Our results are denoted by bold letters and L denotes linear time algorithms. \cf\ and \cl\ indicate whether the search is considered with an \cf- or an \cl-tree. }\label{tab:results}
\end{table}

This paper is organized as follows: First, we provide the necessary definitions in Section~\ref{sec:prelim}. An overview of the considered graph searches is given afterwards. In Section~\ref{sec:ldfs} we present a polynomial time algorithm for the \cl-tree problem of LDFS. Sections~\ref{sec:lbfs} and~\ref{sec:mnsmcs} are dedicated to the \NP-completeness of the \cf-tree problem for LBFS, MCS and MNS. In Section~\ref{sec:split} we give the linear time algorithms for split graphs. We conclude the paper with some related open problems.

	% !TeX spellcheck = en_US
\section{Preliminaries}\label{sec:prelim}

\subsection{General Notation}

All graphs considered in this paper are finite, undirected, simple and connected. Given a graph $G=(V,E)$, we denote by $n$ and $m$ the number of vertices and edges in $G$, respectively. For a vertex $v\in V$, we denote by $N(v)$ the \emph{neighborhood} of $v$, i.e., the set $N(v)=\{u\in V\mid uv\in E\}$, where an edge between $u$ and $v$ in $G$ is denoted by $uv$. The \emph{closed neighborhood} of $v$ is the set $N[v]=N(v)\cup \{v\}$. A \emph{clique} in a graph $G$ is a set of pairwise adjacent vertices and an \emph{independent set} in $G$ is a set of pairwise nonadjacent vertices. If the neighborhood of a vertex $v$ in $G$ is a clique, then $v$ is said to be a \emph{simplicial vertex}.  The \emph{complement} of the graph $G$ is the simple graph $\overline{G}$ having the same set of vertices as $G$ where for $x,y\in V$, we have that $xy$ is an edge of $\overline{G}$ if and only if it is not an edge in $G$. For a graph $G = (V,E)$ and an edge $e=uv$, where $u$ and $v$ are nonadjacent vertices in $G$, we define $G + e$  to be a graph with vertex set $V$ and edge set $E \cup \{e\}$.

Given a subset $S$ of vertices in $G$, we denote by $G[S]$ the \emph{subgraph of $G$ induced by $S$}, where $V(G[S])=S$ and $E(G[S])=\{xy\in E(G) \mid x\in S, y\in S\}$. By $G-S$ we denote the graph induced by $V(G)\setminus S$. If $S$ contains just one element $v$, we will simply write $G-v$ to denote the graph induced by $V(G)\setminus \{v\}$.

%Given a nonadjacent vertices $a$ and $b$ in graph $G$, a subset $C$ of vertices $V\setminus \{a,b\}$ is said to be an $(a,b)$-separator in $G$ if $a$ and $b$ are in distinct connected components of $G-C$. If no proper subset of $C$ is an $(a,b)$-separator in $G$, then $C$ is a minimal $(a,b)$-separator in $G$. Finally, $C$ is a minimal separator in $G$ if it is a minimal $(a,b)$-separator for some pair of nonadjacent vertices $a$ and $b$.

A graph $G$ that contains no induced cycle of length larger than $3$ is called \emph{chordal}. If neither $G$ nor its complement contains an induced cycle of length $5$ or more, then $G$ is said to be \emph{weakly chordal}. A \emph{two-pair} in a graph is a pair of non-adjacent vertices such that every induced path between the two vertices has exactly two edges. We use the following fact about weakly chordal graphs:

\begin{lemma}\cite{spinrad1995algorithms}\label{intro:lemma1}
	Let $ G =(V,E)$ be a graph with a two pair $ \{x,y\} $. Then $ G $ is weakly chordal if and only if $ G+xy $ is weakly chordal.  
\end{lemma}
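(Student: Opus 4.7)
My plan is to argue both directions by contradiction, relying on the characterization of weakly chordal graphs as those containing no induced $C_k$ or $\overline{C_k}$ with $k \geq 5$. In each direction I will take a forbidden induced subgraph in one of $G$, $G+xy$ and either transport it to the other graph or use the two-pair property to build an induced $x$-$y$ path of length at least three in $G$, contradicting the hypothesis on $\{x,y\}$.

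\textbf{Forward direction.} Assume $G$ is weakly chordal and suppose $G+xy$ contains an induced $C_k$ or $\overline{C_k}$ for some $k \geq 5$. If $G+xy$ has an induced $C_k$: when the cycle uses $xy$, deleting that edge exhibits an induced $x$-$y$ path of length $k-1 \geq 4$ in $G$, contradicting the two-pair property; when it does not use $xy$, the vertices $x,y$ cannot both lie on the cycle (else $xy$ would be a chord in $G+xy$), so the cycle is already induced in $G$. If instead $G+xy$ has an induced $\overline{C_k}$, pass to the complement: $\overline{G}\setminus\{xy\}$ contains an induced $C_k$ $D$. If $D$ misses at least one of $x,y$, then $D$ is also induced in $\overline{G}$, contradicting weak chordality of $G$. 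Otherwise $D$ contains both $x,y$, necessarily non-adjacent along $D$, so $xy$ becomes a chord in $\overline{G}[V(D)]$, splitting $D$ into two induced sub-cycles of lengths $a+1,b+1$ with $a+b = k$ and $a,b \geq 2$. If either length is at least five we obtain an induced $C_\ell$ in $\overline{G}$ with $\ell\ge 5$, a contradiction; the only surviving possibilities are $k=5$ with $(a,b)=(2,3)$ and $k=6$ with $(a,b)=(3,3)$. In each of these two residual cases I will write out $G[V(D)]$ explicitly as the complement of cycle-plus-chord and exhibit an induced $x$-$y$ path of length at least three (a length-four path in the $k=5$ case, a length-three path of the form $x,v,w,y$ in the $k=6$ case), again contradicting the two-pair property.

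\textbf{Reverse direction.} Assume $G+xy$ is weakly chordal and suppose that $G$ contains an induced $C_k$ with $k \geq 5$. If at most one of $x,y$ lies on this cycle, it remains induced in $G+xy$, a contradiction; otherwise both are on it and, being non-adjacent in $G$, they split the cycle into two induced $x$-$y$ paths whose lengths $a,b$ satisfy $a+b=k\geq 5$, forcing one of them to exceed two and violating the two-pair property. Finally, if $G$ contains an induced $\overline{C_k}$ with $k \geq 5$, then $\overline{G}$ contains an induced $C_k$, and the edge $xy$ lies in $\overline{G}$ since $xy\notin E(G)$. If that cycle avoids $xy$, it is induced in $\overline{G}\setminus\{xy\}=\overline{G+xy}$, contradicting weak chordality of $G+xy$. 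Otherwise, removing $xy$ yields an induced $x$-$y$ path $x,u_1,\dots,u_{k-2},y$ in $\overline{G}$ of length $k-1\ge 4$; reading off the complement adjacency pattern along this path shows that $x,u_{k-2},u_1,y$ is an induced $x$-$y$ path of length three in $G$, contradicting the two-pair property.

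\textbf{Main obstacle.} The only genuinely delicate step is the small-$k$ subcase ($k\in\{5,6\}$) of the $\overline{C_k}$ analysis in the forward direction: here the two sub-cycles created by the chord $xy$ in $\overline{G}[V(D)]$ are both too short to contradict weak chordality of $G$ directly, so the two-pair hypothesis must do the work through an explicit combinatorial construction inside $G[V(D)]$.
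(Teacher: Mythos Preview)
The paper does not give its own proof of this lemma; it is quoted from \cite{spinrad1995algorithms} and used as a black box. So there is nothing in the paper to compare your argument against.

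That said, your proof is correct. Each case is handled soundly: the long-cycle cases all produce either a forbidden induced cycle in the other graph or an induced $x$--$y$ path of length at least three, and your explicit analyses of the residual cases $k=5$, $(a,b)=(2,3)$ and $k=6$, $(a,b)=(3,3)$ go through exactly as you describe (e.g.\ $x\,v\,u\,w\,y$ in the first case and $x\,u_2\,v_2\,y$ in the second are indeed induced in $G[V(D)]$). Likewise, in the reverse direction your observation that $x,u_{k-2},u_1,y$ is an induced $P_4$ in $G$ is correct, since in $\overline{G}[V(D)]$ the only edges are the cycle edges. The one thing worth saying explicitly when you write it up is why the two sub-cycles obtained from the chord $xy$ are \emph{induced} in $\overline{G}$: this holds because $D$ was induced in $\overline{G}\setminus\{xy\}$, so $\overline{G}[V(D)]$ consists precisely of the edges of $D$ together with $xy$, and hence neither arc-plus-chord cycle has any further chord. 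With that sentence added, the argument is complete.
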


Similarly, the deletion of some particular vertices does not destroy the property of being weakly chordal.

\begin{lemma}\label{intro:lemma2}
	Let $ G =(V,E) $ be a graph and $ v \in V $ such that $ v $ is simplicial or adjacent to at least $ n-2 $ vertices of $ V $. Then $ G $ is weakly chordal if and only if $ G - v $ is weakly chordal.
\end{lemma}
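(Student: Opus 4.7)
The forward direction is the easy half: weakly chordal graphs are a hereditary class (forbidden subgraph characterization in terms of induced $C_k$ and $\overline{C_k}$ for $k\geq 5$), so $G-v$ is weakly chordal whenever $G$ is. So the work is all on the converse.

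For the converse, I would argue by contradiction. Assume $G-v$ is weakly chordal but $G$ is not. Then $G$ contains some induced subgraph $H$ isomorphic to $C_k$ or $\overline{C_k}$ with $k\geq 5$, and since $G-v$ contains no such $H$, the subgraph $H$ must use the vertex $v$. The strategy is then to show that neither of the two hypotheses on $v$ is compatible with $v$ belonging to such an $H$.

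Case $v$ simplicial: Since $v$ is simplicial in $G$, its neighborhood in $G$ is a clique, and hence so is $N_H(v)$. I would then check that in both $C_k$ and $\overline{C_k}$ for $k\geq 5$ the neighborhood of any single vertex fails to be a clique. For $C_k$ the two neighbors of $v$ are at distance $2$ in the cycle, so they are non-adjacent. For $\overline{C_k}$, $v$ has $k-3\geq 2$ neighbors, which correspond in $C_k$ to the $k-3$ vertices at distance $\geq 2$ from $v$; two consecutive such vertices are adjacent in $C_k$ and therefore non-adjacent in $\overline{C_k}$, so $N_H(v)$ again fails to be a clique. Contradiction.

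Case $\deg_G(v)\geq n-2$: Now $v$ has at most one non-neighbor in all of $G$, so certainly at most one non-neighbor in $H$. I would then observe the complementary fact that in both $C_k$ and $\overline{C_k}$ with $k\geq 5$ every vertex has at least two non-neighbors inside the subgraph: in $C_k$, $v$ has exactly $k-3\geq 2$ non-neighbors, and in $\overline{C_k}$, $v$ has exactly $2$ non-neighbors (its two cyclic neighbors in $C_k$). This contradicts the degree assumption and completes the proof.

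The argument is essentially a short case analysis, so there is no single hard step; the only thing to be careful about is to handle $C_k$ and $\overline{C_k}$ symmetrically in each case and to make sure the bounds $k\geq 5$ are what make the counts ($k-3\geq 2$ and the existence of non-adjacent vertices in $N_H(v)$) go through — in particular, $k=4$ would fail in the simplicial case since $\overline{C_4}=2K_2$ has no vertex whose neighborhood is not a clique of size $\leq 1$.
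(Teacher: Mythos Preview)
Your proof is correct and follows essentially the same approach as the paper's: both argue that $v$ cannot lie on an induced $C_k$ or $\overline{C_k}$ with $k\geq 5$ by checking, in each case (simplicial vs.\ near-universal), that the local degree/adjacency constraints forced on $v$ inside such a cycle or anticycle are violated. The paper phrases the antihole case as an induced cycle in $\overline{G}$ rather than as an induced $\overline{C_k}$ in $G$, but the arguments are otherwise identical, with yours spelling out the counts a bit more explicitly.
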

\begin{proof}
	If $ v $ is simplicial then it cannot be part of an induced cycle of $ G $ of size $ \geq 4 $. Suppose that $ v $ is part of an induced cycle of size $ \geq 5 $ in $ \overline{G} $. Then there is an edge $ uw $ in this cycle, such that $ vu, vw \notin E(\overline{G}) $, a contradiction to $ v $ being simplicial.
	
	Suppose that $ v $ has at least $ n-2 $ neighbors in $ G $. Then $ v $ has only one neighbor in $ \overline{G} $ and, thus, cannot be part of an induced cycle. Suppose $ v $ is part of an induced cycle of size $ \geq 5 $ in $ G $. Then $ v $ must be non-adjacent to at least two vertices, a contradiction.
\end{proof}

A \emph{split graph} $G$ is a graph whose vertex set can be divided into sets $C$ and $I$ such that $C$ is a clique in $G$ and $I$ is an independent set in $G$. It is easy to see, that every split graph is chordal, whereas every chordal graph is also weakly chordal.

An ordering of vertices in $G$ is a bijection $\sigma: V(G) \rightarrow \{1,2,\dots,n\}$. For an arbitrary ordering $\sigma$ of vertices in $G$, we denote by $\sigma(v)$ the position of vertex $v\in V(G)$. Given two vertices $u$ and $v$ in $G$ we say that $u$ is \emph{to the left} (resp. \emph{to the right}) of $v$ if $\sigma(u)<\sigma(v)$ (resp. $\sigma(u)>\sigma(v)$) and we denote this by $u \prec_{\sigma}v$ (resp.  $u \succ_{\sigma}v$). 

A \emph{tree} is an acyclic connected graph. A \emph{spanning tree} of a graph $G$ is an acyclic connected subgraph of $G$ which contains all vertices of $G$. A tree together with a distinguished \emph{root vertex} $r$ is said to be \emph{rooted}. In such a rooted tree a vertex $v$ is an \emph{ancestor} of vertex $w$ if $v$ is an element of the unique path from $w$ to the root $r$. In particular, if $ v $ is adjacent to $ w $, it is called the \emph{parent} of $ w $. Furthermore, a vertex $ w $ is called the \emph{descendant (child)} of $ v $ if $ v $ is the ancestor (parent) of $ w $. A tree is a caterpillar tree, if and only if it admits a dominating path $P$, i.e., every vertex is either in $P$ or adjacent to a vertex in $P$.

\subsection{Graph Searches}

In 1976 Rose, Tarjan and Lueker defined a linear time algorithm (Lex-P) which computes a perfect elimination ordering of a graph if any exists. This algorithm is known as \emph{Lexicographic Breadth First Search} (LBFS) and yields a linear time recognition algorithm for chordal graphs~\cite{rose1976}. LBFS exhibits many interesting structural properties and has been used as a subroutine in many other recognition and optimization algorithms.

\begin{algorithm2e}[H]
	\KwIn{Connected graph $G=(V,E)$ and a distinguished vertex $ s \in V $}
	\KwOut{A vertex ordering $ \sigma $}
	\Begin{
		$ label(s) \leftarrow n $\;
		
		\lFor{each vertex $v \in V-{s}$}{$ label(v) \leftarrow \emptyset $}
		
		\For{$ i \leftarrow 1 $ to $ n $}{pick an unnumbered vertex $ v $ with lexicographically largest label\;\label{slice}
			$ \sigma(i) \leftarrow v$\;
			\For{each unnumbered vertex $ w \in N(v) $}{append $ (n-i) $ to $ label(w) $\;}}
		
	}\caption{Lexicographic Breadth First Search}\index{LBFS}
	\label{lbfs}
\end{algorithm2e}

\emph{Maximum Cardinality Search} (MCS) was introduced in 1984 by Tarjan and Yannakakis~\cite{tarjan1984simple} as a simple alternative to LBFS for recognizing chordal graphs. They noticed that, instead of remembering the \emph{order} in which previous neighbors of a vertex had appeared, it sufficed to just store the \emph{number} of previously visited neighbors for each vertex. This observation resulted in an algorithm which has a linear running time and an easy implementation.

\begin{algorithm2e}
	\KwIn{Connected graph $G=(V,E)$ and a distinguished vertex $ s \in V $}
	\KwOut{A vertex ordering $ \sigma $}
	\Begin{		
		\For{$ i \leftarrow 1 $ to $ n $}{pick an unnumbered vertex $ v $ with the most numbered neighbors\;
			$ \sigma(i) \leftarrow v$\;}
		
	}\caption{Maximum Cardinality Search}
	\label{mcs}
\end{algorithm2e}

In~\cite{corneil2008unified}, Corneil and Krueger defined \emph{Lexicographic Depth First Search} as a lexicographic analogue to DFS. Since then, it has been used for many applications, most notably to solve the minimum path cover problem on cocomparability graphs~\cite{corneil2013}.

\begin{algorithm2e}
	\KwIn{Connected graph $G=(V,E)$ and a distinguished vertex $ s \in V $}
	\KwOut{A vertex ordering $ \sigma $}
	\Begin{
		$ label(s) \leftarrow 0 $\;
		
		\lFor{each vertex $v \in V-{s}$}{$ label(v) \leftarrow \emptyset $}
		
		\For{$ i \leftarrow 1 $ to $ n $}{pick an unnumbered vertex $ v $ with lexicographically largest label\;
			$ \sigma(i) \leftarrow v$\;
			\lForEach{unnumbered vertex $ w \in N(v) $}{prepend $ i $ to $ label(w) $}}
		
	}\caption{Lexicographic Depth First Search}\index{LDFS}
	\label{ldfs}
\end{algorithm2e}

\emph{Maximum Neighborhood Search (MNS)} was introduced by Corneil and Krueger~\cite{corneil2008unified} in 2008 as a generalization of LBFS, LDFS and MCS. Instead of using strings (like LBFS and LDFS) or integers (like MCS) the algorithm uses sets of integers as labels and the maximal labels are those sets which are inclusion maximal. Unlike the labels of LBFS, LDFS and MCS, the labels of MNS are not totally ordered and there can be many different maximal labels. Corneil and Krueger showed that every search ordering of LBFS, LDFS and MCS is also an MNS ordering. This result was generalized in 2009 by Berry et al.~\cite{Berry2009mls} who showed that the set of MNS orderings is equal to the set of orderings of \emph{Maximum Label Search}.

\begin{algorithm2e}
	\KwIn{Connected graph $G=(V,E)$ and a distinguished vertex $ s \in V $}
	\KwOut{A vertex ordering $ \sigma $}
	\Begin{
		assign the label $ \emptyset $ to all vertices\;
		$ label(s) \leftarrow \{n+1\} $\;
		\For{$ i \leftarrow 1 $ to $ n $}{pick an unnumbered vertex $ v $ with maximal label under set inclusion\;
			$ \sigma(i) \leftarrow v$\;
			\lForEach{unnumbered vertex $ w $ adjacent to $ v $}{add $ i $ to $ label(w) $}	
		}	
	}\caption{Maximum Neighborhood Search}
	\label{algo:mns}
\end{algorithm2e}

The relationship between the various searches can be found in Figure~\ref{fig:searches}. For instance, observe that any LBFS, LDFS or MCS is also an MNS. However, the opposite does not hold.

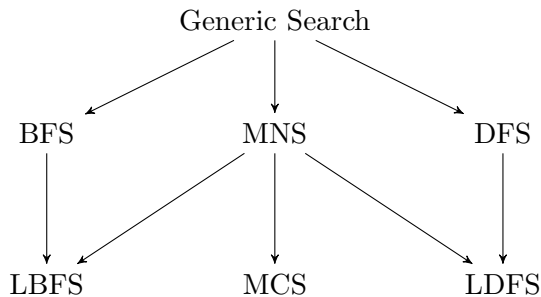
\begin{figure}
	\centering
	\begin{tikzpicture}%[scale=0.5]
	\node (gs) at (3,5.5) {Generic Search};%
	\node (bfs) at (0,4) {BFS};%
	\node (dfs) at (6,4) {DFS};%
	\node (mns) at (3,4) {MNS};%
	\node (mcs) at (3,2) {MCS};%
	\node (lbfs) at (0,2) {LBFS};%
	\node (ldfs) at (6,2) {LDFS};%
	\draw[-stealth'] (gs)--(bfs);%
	\draw[-stealth'] (gs)--(mns);%
	\draw[-stealth'] (gs)--(dfs);%
	\draw[-stealth'] (bfs)--(lbfs);%
	\draw[-stealth'] (mns)--(mcs);%
	\draw[-stealth'] (dfs)--(ldfs);%
	\draw[-stealth'] (mns)--(lbfs);%
	\draw[-stealth'] (mns)--(ldfs);%
	\end{tikzpicture}\caption{This figure represents the relationships between graph searches. The arrows represent proper inclusions. Thus, for example the arrow between BFS and LBFS implies that every LBFS is also a BFS. Searches on the same level are incomparable.}\label{fig:searches}
\end{figure}

\subsection{The Search Tree Recognition Problem}\label{sec:search:tree}

The definition of the term \emph{search tree} varies between different paradigms. However, typically, it consists of the vertices of the graph and, given the search order $ (v_1, \ldots , v_n) $, for each vertex $ v_i $ exactly one edge to a $ v_j \in N(v_i) $ with $ j < i $. By specifying to which of the previously visited neighbors a new vertex is adjacent in the tree, we can define different types of graph search trees. For example, in a BFS a vertex is typically adjacent to the leftmost neighbor in the search order, while in DFS a vertex $ v $ is adjacent to the rightmost neighbor to the left of $ v $. This motivates the following definition.

\begin{definition}
	Given a search discovery order $ \sigma:=(v_1, \ldots , v_n) $ of a given search on a connected graph $ G=(V,E) $, we define the \emph{first-in tree} (or \cf-tree) to be the tree consisting of the vertex set $ V $ and an edge from each vertex to its leftmost neighbor in~$ \sigma $.
	
	The \emph{last-in tree} (or \cl-tree) is the tree consisting of the vertex set $ V $ and an edge from each vertex $ v_i $ to its rightmost neighbor $ v_j $ in $ \sigma $ with $ j < i $.
\end{definition}

As explained above, if $\sigma$ and $T$ are the output of a classical BFS, then $T$ is an \cf-tree with respect to $\sigma$, while for a classical DFS the tree $T$ is an \cl-tree with respect to~$\sigma$. Given this definition, we can state the following decision problem.

\begin{problem}{\cf-Tree (\cl-Tree) Recognition Problem}
	Instance: & A connected graph $ G=(V,E) $ and a spanning tree $ T $. \\ 
	Task: & Decide whether there is a graph search of the given type such that $ T $ is \\
	& its \cf-tree (\cl-tree) of $ G $.
\end{problem}

When comparing the different searches, one can see that graph search trees behave very similarly to the searches themselves, in the sense that, for example, an LBFS tree is also a BFS tree, but not vice versa. Some examples of graph search trees illustrating these relationships can be found in Figure~\ref{prel:fig1}.

\begin{figure}
	\centering
	\begin{tikzpicture}[vertex/.style={inner sep=2pt,draw,circle}]		
	\begin{scope}
	\node[vertex] (1) at (0,0) {};
	\node[vertex] (2) at (0.75,0) {};
	\node[vertex] (3) at (1.5,0) {};
	\node[vertex] (4) at (0.75,1) {};
	\node[vertex] (5) at (0.375,-1) {};
	\node[vertex] (6) at (1.125,-1) {};
	\node[] (a) at (0,1) {a)};
	
	\draw[] (1)--(2)--(3)--(4)--(2)--(6)--(3)--(2)--(5)--(1)--(4);
	\draw[line width =2] (5)--(1)--(4)--(2)--(4)--(3)--(6);
	\end{scope}
	\begin{scope}[xshift=3cm]
	\node[vertex] (1) at (0,0) {};
	\node[vertex] (2) at (0.75,0) {};
	\node[vertex] (3) at (1.5,0) {};
	\node[vertex] (4) at (0.75,1) {};
	\node[vertex] (5) at (0,-1) {};
	\node[vertex] (6) at (0.75,-1) {};
	\node[vertex] (7) at (1.5,-1) {};
	
	\node[] (b) at (0,1) {b)};
	
	\node[vertex] (8) at (0.375,-2) {};
	\node[vertex] (9) at (1.125,-2) {};
	
	\draw[line width=2] (8)--(5)--(1)--(4)--(2)--(6)--(2)--(4)--(3)--(7)--(9);
	\draw[] (1)--(2)--(3);
	\draw[] (8)--(6)--(9);
	\end{scope}
	\begin{scope}[xshift=6cm]
	\node[vertex] (1) at (0,0) {};
	\node[vertex] (2) at (0.75,0) {};
	\node[vertex] (3) at (1.5,0) {};
	\node[vertex] (4) at (0.75,1) {};
	\node[vertex] (5) at (0,-1) {};
	\node[vertex] (6) at (1.5,-1) {};
	\node[vertex] (7) at (0.75,-2) {};
	
	\node[] (c) at (0,1) {c)};
	
	\draw[line width =2] (7)--(6)--(3)--(4)--(2)--(4)--(1)--(5);
	\draw[] (7)--(5)--(2)--(1) --(2)--(3);
	\draw[] (1) to [bend right=45] (3);
	\end{scope}
	\begin{scope}[xshift=9cm]
	\node[vertex] (1) at (0,0) {};
	\node[vertex] (2) at (1.5,0) {};
	\node[vertex] (3) at (0.75,0.75) {};
	\node[vertex] (4) at (0.75,1.5) {};		
	\node[vertex] (5) at (0.75,-0.75) {};
	\node[vertex] (6) at (0.75,-1.5) {};
	\node[vertex] (7) at (0.75,-2.25) {};
	
	\node[] (d) at (0,1) {d)};
	
	\draw[line width=2] (4)--(3)--(5)--(6)--(7)--(6)--(2)--(1);
	\draw[] (1)--(3)--(2)--(5)--(1);
	\end{scope}
	\end{tikzpicture}\caption{Four examples of graphs with their search trees denoted by the thick edges. The graph in a) depicts a search tree of BFS that is not an \cf-tree for LBFS or MNS. The graph in b) depicts an \cf-tree of MNS and BFS that is not an \cf-tree for LBFS. The graph in c) shows a search tree that is an \cf-tree of MNS, BFS and LBFS that is not an \cf-tree of MCS. Finally, the graph in d) gives an example of a search tree that is an \cl-tree for DFS, but not for LDFS.}\label{prel:fig1}
\end{figure}
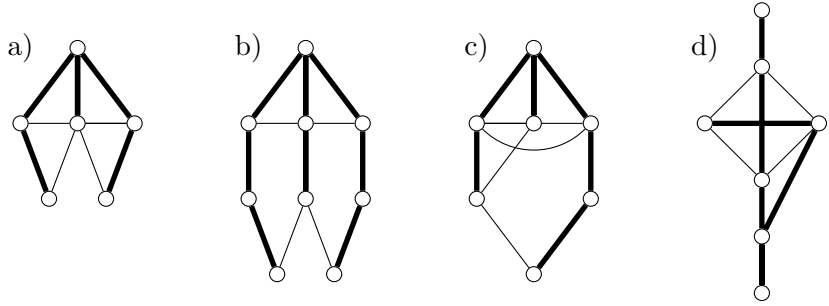

	% !TeX spellcheck = en_US
\section{A Polynomial Algorithm for Lexicographic Depth First Search}\label{sec:ldfs}

As Lexicographic Depth First Search is a special case of DFS, the most natural search tree to be considered here is the \cl-tree. We give a polynomial-time algorithm (Algorithm~\ref{ldfs-tree}) which, given a graph $G$ and its spanning tree $T$, decides whether $T$ is an \cl-tree of LDFS on $G$. This is an interesting contrast to the fact that it is \NP-complete to decide whether a given vertex is an end-vertex of LDFS, as shown by Charbit et al.~\cite{charbit2014influence}.

In essence, Algorithm~\ref{ldfs-tree} runs an LDFS and at every step checks whether there is still a possible choice of vertex which does not contradict the search tree.

%\begin{algorithm2e}
%	\KwIn{Graph $G=(V,E)$, spanning tree $T$ of $G$.}
%	\KwOut{$T$ is an \cl-tree of LDFS on $G$ or not.}
%	\Begin{
%		\For{each vertex $s \in V$}{
%			
%			$S \leftarrow  \{s\} $\;
%			
%			\For{each vertex $v \in V-{s}$}{$ \lab(v) \leftarrow \emptyset $\;}
%			
%			\For{each vertex $v \in N(s)$}{prepend 0 to $ \lab(v) $\;
%				$ \pred(v) \leftarrow s $\;
%			}
%			
%			\While{$ S \neq V$}{choose a node $v \in V-S$ with lexicographic largest label, such that $\{\pred(v),v\}\in E(T)$ \;
%				
%				\If{if no such v exists}{break: go to the next step of first for iteration\;}
%				
%				$ S \leftarrow S \cup \{v\} $\;
%					
%				\For{$ w \in N(v) \cap S $}{prepend $i$ to $ \lab(w) $\;
%					 $ \pred(w) \leftarrow v $\;
%				}
%			}
%				\Return{$T$ is an \cl-tree of LDFS on $ G $.}
%		}	
%		\Return{$T$ is not an \cl-tree of LDFS on $ G $.}		
%	}
%	\caption{Algorithm which decides whether $T$ is an \cl-tree of LDFS on $ G $.}
%	\label{ldfs-tree}
%\end{algorithm2e}

\begin{algorithm2e}
	\KwIn{Graph $G=(V,E)$, spanning tree $T$ of $G$, and a vertex $ r \in V $.}
	\KwOut{$T$ is an \cl-tree of LDFS on $G$ or not.}
	\Begin{			
			$S \leftarrow  \{r\} $\;
			
			\For{each vertex $v \in V-{r}$}{$ \lab(v) \leftarrow \emptyset $\;}
			
			\For{each vertex $v \in N(r)$}{prepend 0 to $ \lab(v) $\;
				$ \pred(v) \leftarrow r $\;
			}
			
			\While{$ S \neq V$}{choose a node $v \in V-S$ with lexicographic largest label, such that $\{\pred(v),v\}\in E(T)$ \;
				
				\If{no such v exists}{\Return{$T$ is not an \cl-tree of LDFS on $ G $\;}}
				
				$ S \leftarrow S \cup \{v\} $\;
				
				\For{$ w \in N(v) \setminus S $}{prepend $i$ to $ \lab(w) $\;
					$ \pred(w) \leftarrow v $\;
				}
			}		
			\Return{$T$ is an \cl-tree of LDFS on $ G $.}
		}	
			
	\caption{Algorithm which decides whether $T$ is an \cl-tree of LDFS on $ G $.}\label{ldfs-tree}	
\end{algorithm2e}

To prove that Algorithm~\ref{ldfs-tree} works correctly, we first state a few lemmas about \cl-trees of DFS.

\begin{lemma}\cite{tarjan1972depth}\label{ldfs:lemma3}
	Let $ G=(V,E) $ be a graph and let $ T $ be an \cl-tree of $ G $ generated by DFS. For each edge $ uv \in E $ it holds that either $ e \in E(T) $ or, without loss of generality, $ u $ is an ancestor of $ v $ in $ T $.
\end{lemma}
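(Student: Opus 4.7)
The plan is to show that for any non-tree edge $uv \in E \setminus E(T)$, after choosing the labeling so that $\sigma(u) < \sigma(v)$ in the DFS discovery order that produces $T$, the vertex $u$ must be an ancestor of $v$ in $T$. The proof will rely on the well-known stack/recursion structure of DFS together with the defining property of the \cl-tree, namely that each $v_i$ is attached to its rightmost already-visited neighbor.

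First I would make precise the correspondence between the DFS recursion and the tree $T$. At any point during the DFS, the sequence of vertices currently ``on the stack'' (the ones whose call has started but not yet finished) forms, by definition of the \cl-tree, a path in $T$ from the root to the vertex currently being explored: when DFS descends from a stack-top $w$ to a new vertex $w'$, the algorithm is at $w$ and $w'$ is an unvisited neighbor, so $w$ is the rightmost visited neighbor of $w'$, which makes $w$ the parent of $w'$ in the \cl-tree $T$. Hence ``being on the stack at the moment $w'$ is visited'' coincides exactly with ``being a proper ancestor of $w'$ in $T$''. As a consequence, a vertex visited while $u$ is on the stack is either $u$ itself or a descendant of $u$ in $T$.

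Now consider the non-tree edge $uv$ with $\sigma(u) < \sigma(v)$. When $u$ is visited and pushed on the stack, $v$ is still unvisited and is a neighbor of $u$. The crucial DFS invariant is that $u$ is not popped from the stack until every neighbor of $u$ has been visited. Therefore $v$ must be visited at some moment at which $u$ is still on the stack, which by the previous paragraph means that $v$ is a descendant of $u$ in $T$, i.e., $u$ is an ancestor of $v$, as required.

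I expect the main obstacle to be writing the stack/tree correspondence cleanly: one has to verify that the ``rightmost already-visited neighbor'' definition of the parent used in the \cl-tree really does coincide with the DFS recursion parent. This is the point at which the fact that $T$ is specifically an \cl-tree (and not merely any spanning tree of $G$), and that $\sigma$ is a DFS order (so recursion into $w'$ is only started from the current top $w$), must be combined; everything else is then an immediate consequence of the standard DFS backtracking rule.
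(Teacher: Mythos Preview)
Your argument is correct and is the standard proof of this classical fact. Note, however, that the paper does not actually prove this lemma: it is stated with a citation to Tarjan~\cite{tarjan1972depth} and used as a black box, so there is no ``paper's own proof'' to compare against. Your write-up supplies exactly the argument one would expect, and the one point you flag as a potential obstacle --- that the \cl-tree parent coincides with the DFS recursion parent --- is handled correctly by observing that the stack-top $w$ is the most recently visited vertex overall and is adjacent to the newly discovered $w'$.
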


\begin{lemma}\cite{korach1989}\label{ldfs:lemma1}
	Let $ G=(V,E) $ be a graph with spanning tree $ T $. Let $ G_i $ be a subgraph of $ G $ with a spanning tree $ T_i $ which is the restriction of $ T $ to $ G_i $. If $ T $ is an \cl-tree of DFS on $ G $, then $ T_i $ is an \cl-tree of DFS on $ G_i $.
\end{lemma}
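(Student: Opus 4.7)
The plan is to invoke the standard characterization of DFS trees due to Tarjan, of which Lemma~\ref{ldfs:lemma3} is the forward direction: a rooted spanning tree of a connected graph is an \cl-tree of some DFS if and only if every non-tree edge joins a vertex to one of its ancestors. Granting this, it suffices to root $T$ and $T_i$ compatibly, transfer the ``back edge'' property from $T$ to $T_i$, and then invoke the converse direction of the characterization.

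First I would fix a root $r$ of $T$ so that $T$ is the \cl-tree of some DFS of $G$ starting at~$r$. Since $T_i = T|_{V(G_i)}$ is by hypothesis a spanning tree of $G_i$, it is connected, so $V(G_i)$ induces a connected subtree of $T$. A standard tree-combinatorial fact then produces a unique vertex $r_i \in V(G_i)$ of minimum $T$-depth, and this $r_i$ is a $T$-ancestor of every element of $V(G_i)$. I would root $T_i$ at $r_i$. Because $T_i \subseteq T$ is connected on $V(G_i)$, the unique $T_i$-path between any two vertices of $V(G_i)$ coincides with their unique $T$-path; it follows that for $u, v \in V(G_i)$, $u$ is a $T_i$-ancestor of $v$ if and only if $u$ is a $T$-ancestor of $v$.

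Next, I would pick any edge $uv \in E(G_i) \setminus E(T_i)$. Since $G_i$ is a subgraph of $G$ we have $uv \in E(G)$, and since $T_i = T|_{V(G_i)}$ with $u, v \in V(G_i)$, the edge $uv$ cannot lie in $E(T)$ either. Lemma~\ref{ldfs:lemma3} then yields, without loss of generality, that $u$ is a $T$-ancestor of $v$; by the ancestor correspondence established above, $u$ is also a $T_i$-ancestor of $v$. Hence every non-tree edge of $T_i$ in $G_i$ is a back edge, and the converse of Tarjan's characterization delivers that $T_i$ is an \cl-tree of some DFS on $G_i$.

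The main, though shallow, obstacle is the bookkeeping around $r_i$: verifying that $V(G_i)$ has a unique $T$-topmost vertex which is a $T$-ancestor of all of $V(G_i)$, and that re-rooting $T_i$ at this vertex preserves ancestry. Everything here rests solely on the connectedness of $T_i$ inside $T$. A variant that sidesteps invoking the converse of Lemma~\ref{ldfs:lemma3} is to take a DFS order $\sigma$ realizing $T$ and to verify directly that $\sigma|_{V(G_i)}$ is a DFS order on $G_i$ whose \cl-tree is $T_i$; the witness required by the DFS four-point condition can always be chosen among the $T$-ancestors of the middle vertex that lie between it and $r_i$, and hence inside $V(G_i)$.
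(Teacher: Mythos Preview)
The paper does not prove this lemma at all: it is stated with a citation to \cite{korach1989} and no argument is given. So there is nothing in the paper to compare against.

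Your proposal is correct. The only point that deserves a word of clarification is the line ``since $T_i = T|_{V(G_i)}$ with $u,v\in V(G_i)$, the edge $uv$ cannot lie in $E(T)$ either.'' For a general (not necessarily induced) subgraph $G_i$, the natural reading of ``restriction of $T$ to $G_i$'' is $E(T_i)=E(T)\cap E(G_i)$, not $T[V(G_i)]$. But these coincide here: $T_i$ is a spanning tree, hence has $|V(G_i)|-1$ edges, while $T[V(G_i)]$ is a subforest of $T$ and so has at most $|V(G_i)|-1$ edges; since $E(T_i)\subseteq E(T[V(G_i)])$, equality holds. With that in hand, your deduction that $uv\notin E(T)$ is clean, and the rest---rooting $T_i$ at the unique $T$-topmost vertex of $V(G_i)$, matching ancestry, and invoking the converse of Tarjan's palm-tree characterization---goes through exactly as you say.

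Your alternative via restricting a realizing DFS order $\sigma$ to $V(G_i)$ also works and is closer in spirit to how the paper handles the LDFS analogue in Lemma~\ref{ldfs:lemma2}, but the back-edge argument you give first is the shorter route for plain DFS.
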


We can give an analogous result for LDFS, which just considers induced subgraphs of $ G $.

\begin{lemma}\label{ldfs:lemma2}
	Let $ G=(V,E) $ be a graph with spanning tree $ T $. Let $ G_i $ be an induced subgraph of $ G $ with a spanning tree $ T_i $ which is the restriction of $ T $ to $ G_i $. If $ T $ is an \cl-tree of LDFS on $ G $, then $ T_i $ is an \cl-tree of LDFS on $ G_i $. In particular, if $ T $ is rooted in $ r \in V $ and $ r \in V(T_i) $, then $ T_i $ is also rooted in $ r $. 
\end{lemma}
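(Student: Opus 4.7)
The plan is to take an LDFS ordering $\sigma=(v_1,\dots,v_n)$ of $G$ whose \cl-tree is~$T$, form the restriction $\sigma_i$ of $\sigma$ to $V(G_i)$ in the induced order, and show that $\sigma_i$ is itself an LDFS ordering of $G_i$ whose \cl-tree is~$T_i$. The subtlety is that lexicographic comparison of labels does not in general survive restriction to a subsequence, so the heart of the argument is to verify that the lex-maximality of $v_{j_t}$ at each step is still preserved on $G_i$.

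The structural input is that, because $T_i=T[V(G_i)]$ is connected, $V(G_i)$ forms a subtree of~$T$ with a unique vertex $r'\in V(G_i)$ closest to~$r$, and $r'$ is a $T$-ancestor of every other vertex of~$V(G_i)$. Since DFS visits ancestors before descendants, this forces $r'=v_{j_1}$, the first vertex of $\sigma$ inside~$V(G_i)$. For any $v\in V(G_i)$, the $T$-ancestors of $v$ then split into those in $V(G_i)$, which form the $T$-path from the parent of $v$ up to $r'$, and those outside $V(G_i)$, which all lie on the $T$-path from the parent of $r'$ up to~$r$ and hence have $\sigma$-position strictly less than~$j_1$. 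Moreover, by Lemma~\ref{ldfs:lemma3}, every already-visited $G$-neighbor of an unvisited vertex is a $T$-ancestor of it. Combining these facts shows that for every $v\in V(G_i)$ unvisited at step $j_t$ of $\sigma$ with $t\geq 2$, the label of $v$ in the $G$-process --- the sequence of its visited $G$-neighbors in reverse order of visit --- decomposes as the label of $v$ in the $G_i$-process followed by a possibly-empty ``tail'' whose entries all have $\sigma$-position below~$j_1$.

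With this decomposition, if the $G_i$-label of $u$ is strictly smaller (lex) than the $G_i$-label of $v$, then so is the $G$-label of $u$ smaller than the $G$-label of $v$: either the two $G_i$-labels differ at some common position (in which case the $G$-labels differ identically there), or one is a proper prefix of the other (in which case the next symbol of the longer $G_i$-label is a vertex of $V(G_i)$ with $\sigma$-position at least~$j_1$, and hence beats either the first tail symbol of the other or the absence thereof). Consequently, since $v_{j_t}$ has a lex-largest $G$-label among all vertices unvisited at step $j_t$ of $\sigma$, it also has a lex-largest $G_i$-label among the unvisited candidates in $V(G_i)$, so picking $v_{j_t}$ at step $t$ of an LDFS on $G_i$ starting at $v_{j_1}$ is valid. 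By induction on $t$, the restriction $\sigma_i$ is such an LDFS ordering of~$G_i$.

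Finally, for every $v_{j_t}$ with $t\geq 2$ the $T$-parent of $v_{j_t}$ lies in $V(G_i)$ by convexity, coincides with the $T_i$-parent with respect to root~$v_{j_1}$, and is the latest visited $G_i$-neighbor of $v_{j_t}$ --- since all visited ancestors outside $V(G_i)$ have $\sigma$-position smaller than~$v_{j_1}$, so they are beaten by any visited $V(G_i)$-ancestor. Hence the \cl-tree of $\sigma_i$ equals $T_i$ rooted at~$v_{j_1}$. If in addition $r\in V(G_i)$, then $r=r'=v_{j_1}$, giving the ``in particular'' clause. The main obstacle --- that restriction to a subsequence can in principle flip lex comparisons --- is sidestepped by the observation that the tail appended to each $G_i$-label consists only of vertices earlier than the first $V(G_i)$-vertex, so lex-maximality always transfers downward.
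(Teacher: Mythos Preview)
Your proof is correct and takes a genuinely different route from the paper's. The paper runs a fresh LDFS $\tau$ on $G_i$ with the tie-breaking rule ``among lex-largest, choose the $\sigma$-leftmost'' and then argues by contradiction that the resulting \cl-tree $R$ equals $T_i$: it picks the first vertex $v$ whose $R$-parent $u$ differs from its $T$-parent, observes that $u$ must be a proper $T$-ancestor of $v$, and derives a label contradiction at the child $w$ of $u$ on the $T$-path to $v$. You instead take $\sigma_i$ to be the literal restriction of $\sigma$ to $V(G_i)$ and verify directly that $\sigma_i$ is an LDFS ordering whose \cl-tree is $T_i$, via the label decomposition ``$G$-label $=$ $G_i$-label followed by a tail of entries with $\sigma$-position $<j_1$''. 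Both arguments ultimately rely on convexity of $V(G_i)$ in $T$ and on Lemma~\ref{ldfs:lemma3}, and in fact the two orderings $\tau$ and $\sigma_i$ coincide, but the styles of proof are different: yours is a direct verification with an explicit structural reason why lex-maximality survives restriction, whereas the paper's is a minimal-counterexample argument on the tree. Your approach has the pleasant side effect of naming the witness ordering explicitly as $\sigma|_{V(G_i)}$ and of handling the case $r\notin V(G_i)$ uniformly through the root $r'=v_{j_1}$, a point the paper's write-up leaves somewhat implicit.
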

\begin{proof}
	Let $ G_i $ be an induced subgraph of $ G $ and let $ T_i $ be the restriction of $ T $ to $ G_i $. Suppose that $ T $ is an \cl-tree of LDFS on $ G $. We will show that in this case $ T_i $ is an \cl-tree of LDFS on $ G_i $.
	
	Let $ \sigma $ be an LDFS search order of $ G $ that results in the search tree $ T $ and let $ \sigma(1):=r $. We run an LDFS on $ G_i $ by always choosing the vertex with largest label which is leftmost in $ \sigma $ and call the new search order $ \tau $. Suppose that the resulting search tree $ R $ does not coincide with $ T_i $. Let $ v $ be the leftmost vertex in $ \tau $ that does not have the same parent in $ R $ as it does in $ T $. Let $ u $ be the parent of $ v $ in $ R $.
	
	Because $ v $ was chosen to be leftmost in $ \tau $ such that it has a different parent in $ R $ than in $ T $, the unique path $ P $ from $ u $ to $ r $ in $ R $ is identical to that in $ T $. Therefore, we can see that $ u $ must be an ancestor of $ v $ in $ T $, due to Lemma~\ref{ldfs:lemma3}. Let $ w $ be the unique child of $ u $ in $ T $ that is an  ancestor of $ v $; in particular $ v \neq w $ and $ w \prec_{\sigma} v $. As $ v $ is a child of $ u $ in $ R $, we can assume that $ v \prec_{\tau} w $. This implies that at the point where $ v $ was chosen, the label of $ v $ was strictly larger than that of $ w $; this is a contradiction, as all vertices that have labeled $ v $ are on $ P $, due to Lemma~\ref{ldfs:lemma3}. Therefore, it is identical to the label $ v $ had at the point when $ w $ was chosen over $ v $ in $ \sigma $.	
\end{proof}

\begin{theorem}
	The \cl-tree recognition problem for LDFS can be solved in polynomial time.
\end{theorem}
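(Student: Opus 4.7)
The plan is to prove correctness of Algorithm~\ref{ldfs-tree} via soundness and completeness; its polynomial running time is immediate, since each of the $O(n)$ iterations updates labels of total length $O(n+m)$ and compares them lexicographically in polynomial time. By construction of the pointers $\pred(v)$, a successful run with output $\tau=(\tau_1,\dots,\tau_n)$ satisfies that $\pred(\tau_i)$ is simultaneously the rightmost neighbor of $\tau_i$ in $\tau_1,\dots,\tau_{i-1}$ and the parent of $\tau_i$ in $T$, so the \cl-tree of $\tau$ is automatically equal to $T$. The remaining work is therefore to show that $\tau$ is a valid LDFS ordering (soundness) and that the algorithm never spuriously fails when $T$ is in fact an \cl-tree of some LDFS on $G$ (completeness).

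For soundness, I read the selection rule as ``pick an unnumbered $v$ of globally lex-largest label satisfying $\{\pred(v),v\}\in E(T)$, failing if no such $v$ exists'', so that every vertex chosen in a successful run has the globally lex-largest label and hence $\tau$ is an LDFS. For completeness, I argue by induction on $i$, maintaining the invariant that the algorithm's current prefix $\tau_1,\dots,\tau_{i-1}$ extends to some LDFS ordering $\sigma$ of $G$ whose \cl-tree is $T$; the base case is exactly the hypothesis of this direction. At the inductive step, the vertex $\sigma_i$ that $\sigma$ picks has globally lex-largest label (as $\sigma$ is LDFS) and satisfies $\pred(\sigma_i)=\operatorname{parent}_T(\sigma_i)$ (as $T$ is the \cl-tree of $\sigma$), so it is a legal algorithmic choice and the algorithm cannot fail. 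If the algorithm picks $\tau_i=\sigma_i$ we simply continue with $\sigma$; otherwise it picks some other $\tau_i$ of identical globally lex-largest label with matching $\pred$, and we must exhibit a new LDFS $\sigma'$ with prefix $\tau_1,\dots,\tau_i$ and \cl-tree $T$.

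Constructing such a $\sigma'$ is the technical crux. Since an LDFS label is exactly the ordered sequence of step numbers of previously visited neighbors, two vertices of identical lex-largest label share the identical set of prior neighbors, and in particular the same $\pred$; the matching condition then forces $\tau_i$ and $\sigma_i$ to have the same $T$-parent, so they are siblings in $T$. By Lemma~\ref{ldfs:lemma3}, every $G$-edge is either a tree edge or joins a vertex to a $T$-ancestor, so no edge of $G$ runs between the $T$-subtree of $\tau_i$ and that of $\sigma_i$. Because LDFS is a DFS, once entered each such subtree is explored contiguously before backtracking, so in $\sigma$ the two subtrees occupy disjoint blocks of consecutive steps. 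Applying Lemma~\ref{ldfs:lemma2} to the induced subgraph on each subtree certifies that the appropriate restrictions of $T$ are themselves realizable as \cl-trees of LDFS on those subgraphs, and I obtain $\sigma'$ by swapping the two blocks in $\sigma$ while leaving all other steps unchanged. Verifying that this spliced ordering is globally an LDFS and still has \cl-tree $T$---in particular that the non-adjacency of the two subtrees makes labels outside the swapped blocks unaffected by the swap---is the main technical obstacle I anticipate.
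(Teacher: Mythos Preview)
Your plan is sound and the exchange argument can be carried through: once you know that $\tau_i$ and $\sigma_i$ are $T$-siblings and that Lemma~\ref{ldfs:lemma3} forbids any edge between distinct subtrees hanging below their common parent $u$, the blocks corresponding to the subtrees of $\sigma_i$, of $\tau_i$, and of any siblings visited in between are pairwise non-adjacent and only see the prefix; a uniform position shift within each block then preserves every lexicographic comparison, so the spliced order is again an LDFS with \cl-tree $T$. Your appeal to Lemma~\ref{ldfs:lemma2} in this step is not really needed---the swap verification is self-contained once Lemma~\ref{ldfs:lemma3} gives you the non-adjacency.

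The paper takes a rather different and shorter route for completeness. Instead of maintaining an invariant and repairing $\sigma$ after each algorithmic choice, it argues by a single contradiction at the moment of failure: if $v$ has lex-largest label but $\pred(v)=u$ is not its $T$-parent, then $u$ is a proper $T$-ancestor of $v$, and one restricts to the induced subgraph on the root-to-$v$ path $P$. Lemma~\ref{ldfs:lemma2} says $P$ must itself be an LDFS \cl-tree of $G[V(P)]$, which forces the path order; but Lemma~\ref{ldfs:lemma3} guarantees that the only visited neighbours of $v$ and of the child $w$ of $u$ on $P$ lie on $P$, so the label comparison at the failure point is reproduced verbatim in $G[V(P)]$, and $v$ would have to precede $w$---contradicting the path order. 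The paper's argument thus uses Lemma~\ref{ldfs:lemma2} in an essential way and avoids any block-swapping bookkeeping; your exchange argument is more hands-on but has the virtue of being the standard greedy-stays-ahead template, and it would adapt more readily if one wanted to show, say, that the algorithm can output \emph{every} LDFS ordering realizing $T$ rather than just one.
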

\begin{proof}
	Algorithm~\ref{ldfs-tree} tests for a fixed $ r \in V $ whether $ T $ can be an \cl-tree for LDFS on $ G $ that is rooted in $ r $. Therefore, assuming the Algorithm~\ref{ldfs-tree} works correctly and in polynomial time, it is enough to apply it to all vertices in $ G $ to decide whether $ T $ is, in fact, an \cl-tree of LDFS. As we begin the search in $ r $ we from now on assume that $ T $ is rooted in a fixed vertex $ r $.
	
	First suppose that the algorithm returns ``$T$ is an \cl-tree of LDFS on $ G $''. In this case, the algorithm has successfully executed an LDFS and it remains to show that the resulting search order has $ T $ as its \cl-tree. This, however, is safeguarded by the fact that at every point at which we have added a vertex $ v $ to our search order, the predecessor of $ v $, i.e., its parent in the resulting search tree, is also adjacent to $ v $ in $ T $.
	
	Now assume that the algorithm returns ``$T$ is not an \cl-tree of LDFS on $ G $''. This implies that at some point of the LDFS there is no vertex $ x $ of lexicographically largest label, such that the predecessor of $ x $ is adjacent to $ x $ in $ T $. Let $ v $ be such a vertex of lexicographically largest label, whose predecessor is not its parent in $ T $. As $ v $ is the first such vertex to appear in the search, the tree $ R $ constructed thus far by Algorithm~\ref{ldfs-tree} is a subtree of $ T $.
	
	Assume that $ T $ is, in fact, an \cl-tree of $ G $ generated by LDFS. Let $ u $ be the predecessor assigned to $ v $ by the algorithm. Thus, due to Lemma~\ref{ldfs:lemma3}, $ u $ must be an ancestor of $ v $ in $ T $. Let $ w $ be the unique child of $ u $ in $ T $ that is also an ancestor of $ v $ and let $ P $ be the unique path from $ v $ to $ r $ in $ T $; in particular, $ u,w \in V(P) $. As a result of Lemma~\ref{ldfs:lemma2}, $ P $ is an \cl-tree of LDFS on $ G[V(P)] $ since $ T $ is an \cl-tree of LDFS on $ G $.
	
	However, Algorithm~\ref{ldfs-tree} and Lemma~\ref{ldfs:lemma3} imply that $ P $ cannot be an \cl-tree of LDFS on $ G[V(P)] $: As we start in $ r $ and as $ P $ is a path, we must choose all vertices up to $ u $ in the order of the path. Due to Lemma~\ref{ldfs:lemma3}, the vertices have the same labels as they did when Algorithm~\ref{ldfs-tree} halted. Therefore, $ v $ has a lexicographically larger label than $ w $. As a result, $ P $ and, thus, $ T $ cannot be a \cl-trees of LDFS.
\end{proof}

	% !TeX spellcheck = en_US
\section{\NP-Completeness for Lexicographic Breadth First Search}\label{sec:lbfs}

It was shown in \cite{corneil2010end} that the LBFS end-vertex problem is \NP-complete. In the following we show that the same holds for the tree-recognition problem.

\begin{theorem}\label{lbfs:theorem1}
	The \cf-tree-recognition problem of LBFS is \NP-complete on weakly chordal graphs.
\end{theorem}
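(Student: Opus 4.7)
The plan is to establish the two usual halves of NP-completeness. Membership in NP is easy: take as certificate an ordering $\sigma$ of $V(G)$. One checks in polynomial time that (a) $\sigma$ is a valid LBFS ordering, via the standard ``$a \prec_\sigma b \prec_\sigma c$ with $ac \in E$, $ab \notin E$ forces a witness $d \prec_\sigma a$ with $db \in E$, $dc \notin E$'' characterization, and (b) for every non-root vertex $v$, the leftmost $\sigma$-predecessor of $v$ in $N(v)$ is exactly its parent in $T$. Both checks take $O(n+m)$ time.

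For hardness, I would reduce from the LBFS end-vertex problem, which is known to be \NP-complete~\cite{corneil2010end}; ideally I would use (or first establish) a version of that reduction whose output is already weakly chordal, so that I can work inside the class. Given an instance $(G,t)$, I would build $(G',T')$ by attaching a small gadget to $G$ that does two things simultaneously: it forces a particular vertex $r$ of the gadget to be the only admissible LBFS start vertex (for instance by making $r$ a universal vertex, or by giving it a unique lexicographic ``fingerprint''), and it contains pendant-like or path-like structures hanging off $t$ whose only possible parent assignments in $T'$ are realized by an LBFS on $G'$ if and only if $t$ is visited last among $V(G)$. The spanning tree $T'$ would include the edges of some canonical tree on $G$ reached during such a search, together with the gadget edges that encode the end-vertex constraint.

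Weak chordality of $G'$ is handled by building the gadget one step at a time and applying Lemmas~\ref{intro:lemma1} and~\ref{intro:lemma2}: each newly added edge is placed across a two-pair of the current graph, and each newly added vertex is inserted either as a simplicial vertex or as a vertex of degree at least $n-2$ in the new graph. This gives an incremental certificate that $G'$ is weakly chordal. Correctness then splits into two implications: any LBFS of $G'$ compatible with $T'$ must, by the forced ordering of the gadget, traverse the copy of $G$ entirely and end at $t$; conversely, any LBFS of $G$ ending at $t$ can be extended through the gadget to produce an LBFS of $G'$ whose \cf-tree is $T'$.

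The main obstacle I expect is the gadget design itself. LBFS tie-breaking is global and very rigid, so ``pinning'' both the start vertex and the last vertex inside $V(G)$ using only the local parent-choice freedom of the \cf-tree is delicate: every auxiliary edge risks either creating an induced $C_k$ for $k\geq 5$ (or such a cycle in $\overline{G'}$) or giving the adversary an alternative LBFS that fits $T'$ but does not correspond to a valid end-vertex witness. Balancing these two pressures---enough structure to constrain the search, little enough to preserve weak chordality---is the technical core of the argument, and is where Lemmas~\ref{intro:lemma1} and~\ref{intro:lemma2} should carry most of the weight.
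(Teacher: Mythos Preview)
Your approach differs fundamentally from the paper's: rather than reducing from the end-vertex problem, the paper gives a direct reduction from 3-SAT, with an explicit graph $G(\I)$ (literal vertices forming the complement of a matching, clause triangles $\{a_i,c_i,t_i\}$, and auxiliary vertices $r,p,q,u$) and a specific spanning tree $T(\I)$. The argument that $T(\I)$ is an LBFS \cf-tree if and only if $\I$ is satisfiable is carried out concretely (Lemmas~\ref{lbfs:lemma1} and~\ref{lbfs:lemma2}), and weak chordality of $G(\I)$ is verified explicitly via Lemmas~\ref{intro:lemma1} and~\ref{intro:lemma2} (Lemma~\ref{lbfs:lemma3}).

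Your plan, by contrast, is not yet a proof: the gadget is never constructed, and you identify this yourself as the main obstacle. Two concrete difficulties are not addressed. First, the \cf-tree records each vertex's \emph{first} visited neighbor, so encoding the constraint ``$t$ is visited \emph{last} among $V(G)$'' through parent assignments is not at all direct; it is unclear what ``pendant-like or path-like structures hanging off $t$'' would achieve this, and no candidate is proposed. Second, and more seriously, different LBFS orderings of $G$ ending at $t$ will in general produce different \cf-trees on $G$, yet your reduction must output a \emph{single} fixed tree $T'$. The phrase ``the edges of some canonical tree on $G$ reached during such a search'' papers over this: there is no canonical such tree. One can try to neutralise the issue by making $r$ universal, so that the \cf-tree restricted to $V(G)\cup\{r\}$ is always a star; but then all information about the ordering inside $V(G)$ disappears from that part of the tree, and the gadget must recover the end-vertex constraint entirely on its own---which returns you to the first difficulty. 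Finally, the end-vertex hardness of~\cite{corneil2010end} is not stated for weakly chordal graphs, so you would also need to revisit that reduction (or its proof) before you can stay inside the class, something you acknowledge but do not carry out.
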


We prove Theorem~\ref{lbfs:theorem1} by giving a reduction from 3-SAT. Let $\I$ be an instance of 3-SAT. We construct the corresponding graph $G(\I)$ and the spanning tree $ T(\I) $ as follows (for an example see Figure~\ref{lbfs:fig1}): Let $ X=\{x_1, \dots, x_k,\overline{x_1},\ldots,\overline{x_k}\} $ be the set of vertices representing the literals of $ \I $. The edge set $ E(X) $ forms the complement of the matching in which $ x_i $ is matched to $ \overline{x_i} $ for every $ i \in \{1,\ldots, k\} $. For each clause $ C_i $ of $ \I $ we have a triangle consisting of vertices $ a_i $, $ c_i $ and $ t_i $. For every triangle representing a clause $ C_i $, the vertex $ c_i $ is adjacent to each literal of the clause $ C_i $.

In addition, we have vertices $ r $, $ p $, $ q $ and $ u $. Vertex $ r $ is adjacent to every vertex apart from the $ t_i $ and $ u $, while $ u $ is adjacent to all vertices apart from the $ t_i $ and $ r $. Vertex $ p $ has additional edges to each vertex in $ X $ and to $ q $, while $ q $ is also adjacent to all vertices in $ X $ and each of the $ a_i $. Altogether, $ G(\I) $ consists of the vertex set $ V(G(\I)) := X \cup \{r,p,q,u\}\cup C_1 \cup \ldots \cup C_l$, where $ C_i $ represents the vertices of the clause-gadget of $ C_i $ and the edge set is defined as above.

The corresponding spanning tree $ T(\I) $ consists of the edges incident to $ r $, an edge between $ u $ and $ p $ and the edges $ c_it_i $ for all $ i \in \{1, \ldots l\} $; they are denoted as thick lines in Figure~\ref{lbfs:fig1}.

\begin{figure}[H]
	\centering
	\begin{tikzpicture}
	[vertex/.style={inner sep=2pt,draw,circle,fill=white},
	noedge/.style={dashed},
	yscale=0.9
	]
	
	\node[vertex,label={0:$r$}] (r) at (3,6.5) {}
	edge[line width=2] (3.6,6)
	edge[line width=2] (2.4,6)
	edge[line width=2] (2.8,6)
	edge[line width=2] (3.2,6);
	
	\draw[rounded corners=5pt, fill = white]
	(-1,0.4) rectangle (8.5,6);
	
	\begin{scope}[scale=0.75, xshift=0cm, yshift=0cm]
	\draw[rounded corners=5pt, fill=black!10!white]
	(-0.5,-0.5) rectangle (2.5,2.5);

	\node[vertex,label=below:\footnotesize$ t_1 $] (t1) at (1.75,0.25){};
	\node[vertex,label=below:\footnotesize$ a_1 $] (a1) at (0.25,1.75){};
	\node[vertex,label=right:\footnotesize$ c_1 $] (c1) at (1.75,1.75){};
	
	\draw[](t1)--(a1)--(c1);
	\draw[line width=2](t1)--(c1);
	
	\node at (1,-1) { $\overline{x_1}\lor x_2\lor \overline{x_3}$};
	\end{scope}
	
	\begin{scope}[scale=0.75, xshift=4cm, yshift=0cm]
	\draw[rounded corners=5pt, fill=black!10!white]
	(-0.5,-0.5) rectangle (2.5,2.5);

	\node[vertex,label=below:\footnotesize$ t_2 $] (t2) at (1.75,0.25){};
	\node[vertex,label=below:\footnotesize$ a_2 $] (a2) at (0.25,1.75){};
	\node[vertex,label=right:\footnotesize$ c_2 $] (c2) at (1.75,1.75){};
	
	\draw[](t2)--(a2)--(c2);
	\draw[line width=2](t2)--(c2);
	
	\node at (1,-1) { $\overline{x_1} \lor \overline{x_3} \lor \overline{x_4}$};
	\end{scope}
	
	\begin{scope}[scale=0.75, xshift=8cm, yshift=0cm]
	\draw[rounded corners=5pt, fill=black!10!white]
	(-0.5,-0.5) rectangle (2.5,2.5);

	\node[vertex,label=below:\footnotesize$ t_3 $] (t3) at (1.75,0.25){};
	\node[vertex,label=below:\footnotesize$ a_3 $] (a3) at (0.25,1.75){};
	\node[vertex,label=right:\footnotesize$ c_3 $] (c3) at (1.75,1.75){};
	
	\draw[](t3)--(a3)--(c3);
	\draw[line width=2](t3)--(c3);
	
	\node at (1,-1) { $\overline{x_1} \lor x_3 \lor \overline{x_4}$};
	\end{scope}

	\begin{scope}[scale=0.75, xshift=3cm, yshift=5cm]
	\draw[rounded corners=5pt, fill=black!10!white]
	(-1,-0.5) rectangle (6.5,2);
	
	\node[vertex,label={180:$x_1$}] (x1) at (0,1.5) {};
	\node[vertex,label={180:$\overline{x_1}$}] (nx1) at (0,0) {}
	edge[] (c1)
	edge[] (c2)
	edge[] (c3)
	edge[noedge] (x1);
	
	\node[vertex,label={180:$x_2$}] (x2) at (2,1.5) {}
	edge[] (c1);
	\node[vertex,label={180:$\overline{x_2}$}] (nx2) at (2,0) {}
	edge[noedge] (x2);

	\node[vertex,label={180:$x_3$}] (x3) at (4,1.5) {}
	edge[] (c3);
	\node[vertex,label={180:$\overline{x_3}$}] (nx3) at (4,0) {}
	edge[] (c1)
	edge[] (c2)
	edge[noedge] (x3);
	
	\node[vertex,label={180:$x_4$}] (x4) at (6,1.5) {};
	\node[vertex,label={180:$\overline{x_4}$}] (nx4) at (6,0) {}
	edge[] (c2)
	edge[] (c3)
	edge[noedge] (x4);
	\end{scope}
	
	\node[vertex,label={180:$q$}] (t) at (0,3) {}
	edge[] (1.5,4.1)
	edge[] (1.5,3.9)
	edge[] (1.5,3.7)
	edge[] (1.5,3.5)
	edge[] (a1)
	edge[] (a2)
	edge[] (a3);
	
	\node[vertex,label={180:$p$}] (p) at (0,5) {}
	edge[] (1.5,5.1)
	edge[] (1.5,4.9)
	edge[] (1.5,4.7)
	edge[] (1.5,4.5)
	edge[] (t);
	
	\node[vertex,label=above:$ u $] (u) at (-2, 3.5) {}
	edge[] (-1,3.4)
	edge[] (-1,3.2)
	edge[] (-1,3.6)
	edge[] (-1,3.8)
	edge[line width=2] (p);
	
	\end{tikzpicture}
	\caption{The \NP-completeness construction for the tree-recognition problem of LBFS. The depicted graph is $G(\I)$ for $\I = (\overline{x_1} \lor x_2 \lor \overline{x_3}) \land (\overline{x_1} \lor \overline{x_3} \lor \overline{x_4}) \land (\overline{x_1} \lor x_3 \lor \overline{x_4})$. In the box containing the literal vertices, only non-edges are displayed by dashed lines. The connection of a vertex with a box implies, that the vertex is connected to all vertices in this box. Tree edges are depicted by thick edges.}\label{lbfs:fig1}
\end{figure}
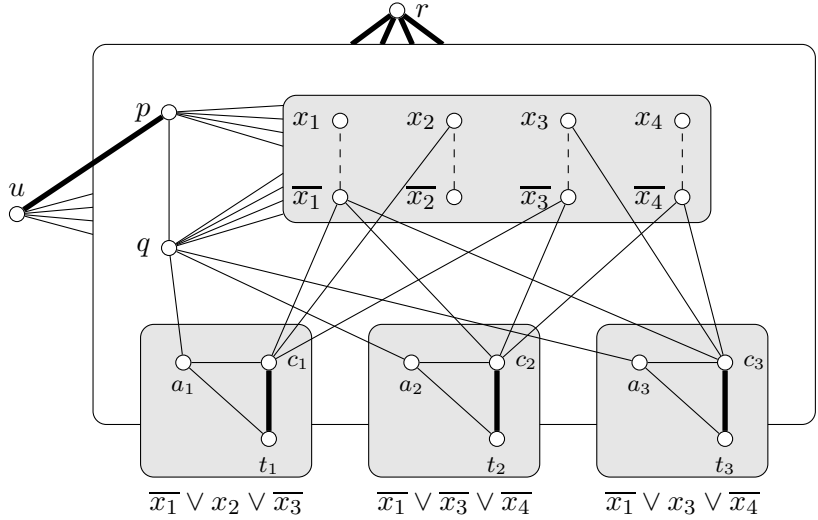

We  proceed to prove Theorem~\ref{lbfs:theorem1} by showing that $ T(\I) $ is an $\cf$-tree of LBFS of $ G(\I) $ if and only if $ \I $ has a satisfying assignment $ \A $.

\begin{lemma}\label{lbfs:lemma1}
	If $ \I $ admits a satisfying assignment $ \A $, then $ T(\I) $ is a possible $\cf$-tree of LBFS on $ G(\I) $.
\end{lemma}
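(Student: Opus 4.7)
The plan is to produce, from a satisfying assignment $\mathcal{A}$, an explicit LBFS run on $G(\mathcal{I})$ starting at $r$ whose $\cf$-tree is exactly $T(\mathcal{I})$. Writing $L^{+}$ for the literal vertices made true by $\mathcal{A}$ and $L^{-}$ for the ones made false (so $|L^{+}|=|L^{-}|=k$), I would target the block template
\[
r,\; p,\; L^{+},\; q,\; L^{-},\; c_1,\dots,c_\ell,\; a_1,\dots,a_\ell,\; u,\; t_1,\dots,t_\ell,
\]
with the internal order inside each block left for LBFS tie-breaking.

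First I would verify that this $\sigma$ really induces $T(\mathcal{I})$. Because $r$ is adjacent to $p$, $q$, every literal and every $a_j,c_j$, each of these vertices gets $r$ as leftmost neighbor, matching the tree. Putting $p$ at position $2$ and using that $u$ is non-adjacent to $r$ makes $p$ the leftmost neighbor of $u$. For each clause triangle the template places $c_j$ strictly before $a_j$, so the only neighbor of $t_j$ to its left in $\sigma$ is $c_j$, covering the remaining tree edges.

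Next I would check that $\sigma$ is realisable by LBFS. During the processing of $L^{+}$ the lexicographically largest label is always shared by $q$ together with every literal whose partner is still unpicked, so choosing the next true literal is always legal. Once $L^{+}$ is drained, $q$ is the unique vertex holding the gap-free label $(n-1,n-2,\ldots,n-k-2)$ and is therefore forced at step $k+3$. After $q$, the false literals dominate because their labels begin $(n-1,n-2,\ldots)$ whereas the $c_j$'s have second entry at most $n-3$ and each $a_i$ only carries $(n-1,n-k-3)$; thus the block $L^{-}$ is legitimate. The trailing blocks ($a_j$'s, then $u$, then $t_j$'s) follow by routine label comparisons.

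The one delicate point, and the only place where the satisfying assignment is actually used, is the transition from $L^{-}$ to the $c_j$'s at step $2k+4$: I must show that every $c_j$ lex-strictly exceeds every $a_i$ at that moment. At that step $a_i$ carries $(n-1,\,n-k-3)$, inherited only from $r$ and $q$, while $c_j$ carries $(n-1,\,n-s_j,\dots)$ where $s_j\in\{3,\dots,k+2\}$ is the step at which the earliest literal of $C_j$ was picked. The existence of such an $s_j$ in this range relies precisely on $\mathcal{A}$ satisfying $C_j$: some literal of $C_j$ lies in $L^{+}$, so $n-s_j\ge n-k-2>n-k-3$, and $c_j$ beats $a_i$. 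This single inequality, which would fail for an unsatisfied clause, is the main obstacle of the proof, and once it is established the construction of $\sigma$ is complete.
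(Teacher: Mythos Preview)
Your argument is correct and follows essentially the same approach as the paper: both construct the identical LBFS order $r,\,p,\,L^{+},\,q,\,L^{-},\,\{c_j\},\,\{a_j\},\,u,\,\{t_j\}$ and verify it yields $T(\mathcal I)$, with your write-up giving more explicit label computations than the paper's terse version. One small wording slip: after the $c$- and $a$-blocks both $c_j$ and $a_j$ lie to the left of $t_j$, so $c_j$ is not the \emph{only} left neighbor of $t_j$ but rather its \emph{leftmost} neighbor; the conclusion is unaffected.
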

\begin{proof}
	Let $ \A $ be a satisfying assignment of $ \I $. The following valid search order produces $ T(\I) $ as its search tree: We begin in $ r $ and then choose $ p $. Next, we can choose vertices from $ X $ according to the assignment $ \A $ in an arbitrary order, i.e., we choose 
%	$ x_i $ if $ x_i $ is set to 1 in $ \A $, and we choose $ \overline{x_i} $ if $ x_i $ is set to 0.
	$ x_i $ or $ \overline{x_i} $ corresponding to whether the variable $x_i$ is set to $1$ or $0$ in $\A$.
	 We are then forced to visit the vertex $ q $, as each remaining vertex of $ X $ is not adjacent to one of the visited vertices of $ X $. After choosing the remaining vertices of $ X $ we proceed to the vertices of the clause gadgets: As a fulfilling assignment sets at least one literal to 1 in each clause, every $ c_i $ has a neighbor that appears earlier in the search order than $ q $ which is the leftmost neighbor of $ a_i $ in the search order. Hence, for each clause gadget $ C_i $ we must choose $ c_i $ before $ a_i $. Therefore, we can choose all vertices $ c_i $ and then all vertices $ a_i $. Finally, we can choose $ u $ and then all the $ t_i $.
	
	It is easy to see that all edges incident to $ r $ belong to the search tree of the constructed order, as well as $ pu $. On the other hand, $ c_it_i $ must be in the search tree for every $ i \in \{1, \ldots, l\} $, as $ c_i $ was always chosen before $ a_i $. Therefore, the search tree of the constructed order coincides with $ T(\I) $.
\end{proof}

We now show the other direction of the proof.

\begin{lemma}\label{lbfs:lemma2}
	If $ \I $ does not admit a satisfying assignment, then $ T(\I) $ cannot be an $\cf$-tree of LBFS on $ G(\I) $.
\end{lemma}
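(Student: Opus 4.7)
The plan is to prove the contrapositive: we assume that $T(\I)$ is an $\cf$-tree of some LBFS ordering $\sigma$ on $G(\I)$, and extract a satisfying assignment $\A$ of $\I$ from $\sigma$. Concretely, let $L$ be the set of literal-vertices visited before $q$ in $\sigma$; set $x_i=1$ if $x_i\in L$, $x_i=0$ if $\bar{x_i}\in L$, and the remaining variables arbitrarily. It suffices to show (i) for no $i$ are both $x_i$ and $\bar{x_i}$ in $L$ (so $\A$ is well-defined), and (ii) $L\cap C_i\neq\emptyset$ for every clause $C_i$ (so $\A$ satisfies $\I$).

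The first step is to pin down the prefix of $\sigma$. Since $u$'s parent in $T(\I)$ is $p$, the vertex $p$ must be $u$'s leftmost neighbor in $\sigma$; because $N(u)=\{p,q\}\cup X\cup\{a_1,\dots,a_l\}\cup\{c_1,\dots,c_l\}$, this forces $p$ to precede every other neighbor of $r$, so $\sigma(1)=r$ and $\sigma(2)=p$. One then proves by induction on the step that before $q$ is chosen, the visited set equals $\{r,p\}\cup L'$ for some $L'\subseteq X$ containing at most one of each pair $\{x_i,\bar{x_i}\}$. The base case is direct: after step two, both $q$ and every $X$-vertex carry label $(n-1,n-2)$, each $a_i$ and $c_i$ carries only $(n-1)$, and $u$ carries $(n-2)$, so the next vertex is either $q$ or an element of $X$. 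For the inductive step, visiting any literal $\ell$ adds a new label entry to $q$ and to every $X$-vertex except $\bar{\ell}$ (since $\ell\bar{\ell}$ is the only non-edge inside $X$); hence $q$'s label lex-strictly dominates $\bar{\ell}$'s throughout the pre-$q$ phase, and similarly $q$ keeps strictly dominating every $a_i$, $c_i$, and $u$. This both secures the invariant and establishes (i).

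For (ii), suppose towards contradiction that $L\cap C_i=\emptyset$ for some clause $C_i$, and let $\pi$ denote the position of $q$ in $\sigma$. At step $\pi$ the label of $a_i$ becomes $(n-1,n-\pi)$, and it stays so until $c_i$ or $t_i$ is visited, since $a_i$'s only remaining neighbors are $c_i$ and $t_i$. On the other hand, the neighbors of $c_i$ visited at any position $\le\pi$ are only $r$ and the literals of $C_i$, and by assumption none of the latter lies in $L$; so $c_i$'s second label-entry must come from a position strictly greater than $\pi$, hence is strictly smaller than $n-\pi$. Therefore $c_i$'s label is lex-strictly dominated by $a_i$'s from step $\pi$ onward, which forces $a_i$ to be visited before $c_i$ in $\sigma$. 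But then $N(t_i)=\{a_i,c_i\}$ implies $t_i$'s leftmost neighbor in $\sigma$ is $a_i$, contradicting that $c_i$ is $t_i$'s parent in $T(\I)$. The main technical obstacle will be the inductive labelling analysis that yields the pre-$q$ invariant; once that is in place, both (i) and (ii) reduce to short, direct lex comparisons.
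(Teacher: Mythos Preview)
Your argument is correct and mirrors the paper's: force $\sigma(1)=r$, $\sigma(2)=p$, show the pre-$q$ segment is a consistent partial assignment $L\subseteq X$, and deduce that any clause missed by $L$ forces $a_i\prec_\sigma c_i$, contradicting $c_it_i\in T(\I)$. Two small slips are worth cleaning up: the inference $\sigma(1)=r$ from ``$p$ precedes every other neighbor of $r$'' is incomplete as stated (it does not by itself exclude $\sigma(1)\in\{p,u,t_j\}$; the paper instead observes that $r$ is the unique vertex all of whose incident $G$-edges lie in $T(\I)$, which immediately pins down the root), and after step $\pi$ the vertex $a_i$ still has $u$ as an unvisited neighbor in addition to $c_i,t_i$---though this is harmless for your comparison, since $u$ is adjacent to both $a_i$ and $c_i$ and so cannot change their relative lexicographic order.
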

\begin{proof}
	We show that for at least one clause gadget $ C_i $ the vertex $ a_i $ is visited before $ c_i $, thus making $ T(\I) $ an infeasible search tree.
	
	To prove this, we analyze the order in which the vertices of $ X $ are visited in any feasible LBFS search. It is easy to see that any LBFS must begin in $ r $, as $ r $ is the only vertex whose incident edges are all tree edges. Next, we are forced to choose $ p $, as otherwise $ pu $ cannot be a tree edge. If $ q $ is chosen next, then, as a result, $ a_i $ must be visited before $ c_i $ for every $ i \in \{1, \ldots , l\} $ and $ T(\I) $ cannot be the resulting search tree. Therefore, a subset of the vertices of $ X $ must be chosen before the vertex $ q $.
	
	If a vertex $ x_i $ is visited, then $ q $ receives a larger label than $ \overline{x_i} $, as they otherwise share the same set of neighbors among the visited vertices up to that point (and analogously if $ \overline{x_i} $ is visited before $ q $). Thus, $ q $ must be chosen between any literal vertex and its negation. The largest subset of $ X $ that can be visited before $ q $ must, therefore, be an assignment of $ \I $. As $ \I $ is not satisfiable, any such assignment must leave at least one clause unfulfilled. If $ C_i $ is such a clause, then at the point at which $ q $ is chosen, $ c_i $ does not contain any neighbors among the visited literal vertices. As a result, $ a_i $ receives a larger label than $ c_i $ and is visited earlier.
	
	Consequently, in any LBFS there must be a clause $ C_i $ such that $ a_i $ is visited before $ c_i $ and $ c_it_i $ cannot be in the search tree. This shows that $ T(\I) $ cannot be a $\cf$-tree of an LBFS.
\end{proof}

\begin{corollary}
	Let $\I$ be an instance of 3-SAT. Then $ \I $ has a satisfying assignment if and only if $ T(\I) $ is a possible $\cf$-tree of LBFS on $ G(\I) $.
\end{corollary}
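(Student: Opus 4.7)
The plan is to obtain the corollary as an immediate consequence of Lemmas~\ref{lbfs:lemma1} and~\ref{lbfs:lemma2}, both of which have already been established. The forward direction of the biconditional is exactly the content of Lemma~\ref{lbfs:lemma1}: given a satisfying assignment $\A$ of $\I$, visiting the vertices of $X$ in accordance with $\A$ produces an LBFS order whose $\cf$-tree coincides with $T(\I)$. So for this direction nothing new has to be shown.

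For the reverse direction I would apply the contrapositive of Lemma~\ref{lbfs:lemma2}: if $T(\I)$ is realized as the $\cf$-tree of some LBFS on $G(\I)$, then $\I$ admits a satisfying assignment. Unpacking this, Lemma~\ref{lbfs:lemma2} shows that any LBFS whose $\cf$-tree equals $T(\I)$ must begin with $r$ and then $p$, must visit $q$ between each pair $\{x_i, \overline{x_i}\}$ (so that the set of literals chosen before $q$ forms a well-defined truth assignment), and must leave every tree edge $c_it_i$ intact; the last requirement forces that assignment to satisfy every clause of $\I$. Combining the two implications yields the biconditional in a single line.

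Because both supporting lemmas are already in hand, the corollary itself presents no real obstacle. The genuine difficulty was already absorbed into the reverse direction, namely Lemma~\ref{lbfs:lemma2}, where one has to rule out every alternative legal LBFS behaviour---for example, confirming that $q$ cannot be selected immediately after $\{r,p\}$ without breaking some $c_it_i$, and that no other tree edge is silently endangered by the gadget structure. Once those points are granted, the present step is a direct combination, and the corollary together with the observation that the reduction is polynomial-time computable and that the $\cf$-tree recognition problem is clearly in \NP\ (a candidate search order is a polynomially verifiable witness) completes the proof of Theorem~\ref{lbfs:theorem1}, modulo the separate verification (via Lemmas~\ref{intro:lemma1} and~\ref{intro:lemma2}) that $G(\I)$ is weakly chordal.
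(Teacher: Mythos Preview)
Your proposal is correct and matches the paper's approach exactly: the corollary is stated without proof in the paper, being an immediate combination of Lemma~\ref{lbfs:lemma1} (forward direction) and the contrapositive of Lemma~\ref{lbfs:lemma2} (reverse direction). Your additional commentary on how this feeds into Theorem~\ref{lbfs:theorem1} via the weak-chordality lemma and membership in \NP\ is accurate and mirrors the paper's surrounding discussion.
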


To conclude the proof of Theorem~\ref{lbfs:theorem1} it remains to show that $G(\I)$ is weakly chordal for every 3-SAT instance $\I$.

\begin{lemma}\label{lbfs:lemma3}
	For each instance $\I$ of 3-SAT, the graph $G(\I)$ is weakly chordal.
\end{lemma}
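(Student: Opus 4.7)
The plan is to reduce $G(\I)$ to a graph whose weak chordality is transparent, by repeatedly applying Lemmas~\ref{intro:lemma1} and~\ref{intro:lemma2} to delete simplicial or near-universal vertices and to add edges between two-pairs. We may assume without loss of generality that no clause of $\I$ contains both a literal and its negation, since such clauses are trivially satisfied and can be discarded.

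First I would carry out the obvious reductions. Each $t_i$ has only the two neighbors $a_i$ and $c_i$, which are adjacent, so it is simplicial and can be deleted by Lemma~\ref{intro:lemma2}. In the resulting graph $r$ is adjacent to every remaining vertex except $u$, and $u$ to every remaining vertex except $r$; both therefore have at least $n-2$ neighbors and may also be deleted by Lemma~\ref{intro:lemma2}. Let $G_2$ denote the resulting graph on $X \cup \{p, q\} \cup \bigcup_i \{a_i, c_i\}$.

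Next I would use Lemma~\ref{intro:lemma1} to add the edges $qc_i$. The claim is that $\{q, c_i\}$ is a two-pair in $G_2$: the neighbors of $c_i$ are exactly $a_i$ together with the three literal vertices of $C_i$, and each of these is also a neighbor of $q$. Hence every induced $q$--$c_i$ path must end at a common neighbor of $q$ and $c_i$, forcing it to have length exactly $2$. Since this argument uses only $N(c_i)$, which is unaffected by adding other edges incident to $q$, the two-pair property survives as we add the edges $qc_1, \ldots, qc_l$ one at a time via Lemma~\ref{intro:lemma1}. After these additions each $a_i$ has as its only neighbors the now adjacent pair $q, c_i$ and is therefore simplicial; Lemma~\ref{intro:lemma2} lets us delete all of them. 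Afterwards each $c_i$ has as its neighbors $q$ together with the three literals of $C_i$, which form a clique by our assumption on $\I$, so $c_i$ is likewise simplicial and may be removed.

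What remains is $X \cup \{p, q\}$, in which $p$ and $q$ are both universal; we delete them, leaving $G[X]$, which is the cocktail-party graph $K_{k \times 2}$. Its complement is a perfect matching and therefore chordal, and $K_{k \times 2}$ itself contains no induced cycle of length $\geq 5$: in such an $m$-cycle each vertex would need $m - 3 \geq 2$ non-neighbors inside the cycle, yet every vertex of $K_{k \times 2}$ has exactly one non-neighbor in the entire graph. Hence $G[X]$ is weakly chordal, and tracing back through the deletions and additions gives the same for $G(\I)$. The only delicate point is verifying that $\{q, c_i\}$ remains a two-pair through the whole sequence of edge additions, and this is immediate from the fact that $N(c_i)$ never changes during the process.
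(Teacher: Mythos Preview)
Your argument is correct and follows a genuinely different reduction path than the paper's. Both proofs begin identically by stripping off the $t_i$, $r$, and $u$ via Lemma~\ref{intro:lemma2}, but then diverge: the paper adds the matching edges $x_i\overline{x_i}$ (as two-pairs), deletes the resulting simplicial vertex $p$, and then passes to the complement, where $q$ becomes simplicial, the $a_i$ become near-universal, and what remains is a split graph. You instead add the edges $qc_i$ (also two-pairs, via the clean observation $N(c_i)\subseteq N(q)$), which makes each $a_i$ simplicial and then each $c_i$ simplicial; after removing the now-universal $p$ and $q$ you are left with the cocktail-party graph, whose weak chordality you verify directly. Your route has the virtue of staying entirely in $G(\I)$ and never invoking the complement, at the cost of one extra hypothesis: you need the three literals of each clause to form a clique in $X$, i.e., that no clause contains a complementary pair. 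The paper's argument works uniformly for every 3-SAT instance as stated, so if you want the lemma in full generality you should either note that such degenerate clauses can be discarded without affecting the reduction behind Theorem~\ref{lbfs:theorem1}, or patch the argument (for instance by also adding the two-pair edges $x_i\overline{x_i}$ before declaring the $c_i$ simplicial).
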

\begin{proof}
	We need to show that both $ G(\I) $ and $ \overline{G(\I)} $ do not contain a cycle of length $ \geq 5 $. As all the $ t_i $ are simplicial, we can disregard them, due to Lemma~\ref{intro:lemma2}. In the remaining graph, both $ r $ and $ u $ are adjacent to all vertices apart from each other and can, thus, be deleted, due to Lemma~\ref{intro:lemma2}.
	
	Let $ H' $ be the graph resulting from deleting $ r $, $ u $ and all the $ t_i $; it suffices to show that $ H' $ is weakly chordal. In addition, it is easy to see that every non-edge $ x_i\overline{x_i} $ forms a two-pair in $ H' $, i.e., the longest induced path between these two vertices is of length~2. Using Lemma~\ref{intro:lemma1}, we see that $ H' $ is weakly chordal if and only if $ H'+x_i \overline{x_i}  $ is weakly chordal. Furthermore, if we add the edges $ x_i\overline{x_i} $ for all $ i \in \{1, \ldots ,k\} $ to $ H' $, the vertex $ p $ becomes simplicial. Therefore, it remains to show that the graph $ H $ which is constructed from $ H' $ by adding the edges $ x_i\overline{x_i} $ for all $ i \in \{1, \ldots ,k\} $ and then deleting $ p $ is weakly chordal.
	
	It is sufficient to show that $ \overline{H} $ is weakly chordal. To do this, we again apply Lemma~\ref{intro:lemma2}. We can delete $ q $ from $ \overline{H} $ as it is simplicial. In the remaining graph, all the $ a_i $ are adjacent to all but one vertex and can, thus, also be deleted. The remaining graph is a split graph, as the $ c_i $ form a clique and the literal vertices form an independent set, and, as a result it is weakly chordal.
\end{proof}

	% !TeX spellcheck = en_US
\section{\NP-Completeness for Maximum Neighborhood Search and Maximum Cardinality Search}\label{sec:mnsmcs}

As we have done for LBFS, we will show that the \cf-tree problems for MNS and MCS are \NP-complete.

\begin{figure}[ht!]
		\centering
	\begin{tikzpicture}[vertex/.style={inner sep=2pt,draw,circle,fill=white},
	noedge/.style={dashed}
	]
	
	\draw[rounded corners=5pt]
 	(-2.5,4) rectangle (6.5,9);
 	\draw[rounded corners=5pt, fill=black!10!white]
 	(-2,7) rectangle (7,8.75);

	\node[vertex,label={180:$x_1$}] (x1) at (-1, 8.4) {};
	\node[vertex,label={180:$\overline{x}_1$}] (nx1) at (-1, 7.4) {};
	\node[vertex,label={0:$x_2$}] (x2) at (1, 8.4) {};
	\node[vertex,label={0:$\overline{x}_2$}] (nx2) at (1, 7.4) {};
	\node[vertex,label={180:$x_3$}] (x3) at (3, 8.4) {};
	\node[vertex,label={0:$\overline{x}_3$}] (nx3) at (3, 7.4) {};
	\node[vertex,label={0:$x_4$}] (x4) at (5, 8.4) {};
	\node[vertex,label={0:$\overline{x}_4$}] (nx4) at (5, 7.4) {};
	
	\node[vertex,label={[label distance=0.5cm]270:$\overline{x}_1 \lor x_2 \lor \overline{x}_3$}] (c1) at (-1, 5.25) {};
	\node[vertex,label={[label distance=0.5cm]270:${x}_1 \lor \overline{x}_3 \lor {x}_4$}] (c2) at (2,5.25) {};
	\node[vertex,label={[label distance=0.5cm]270:$\overline{x}_1 \lor \overline{x}_3 \lor \overline{x}_4$}] (c3) at (5, 5.25) {};
	
	\node[vertex,label={0:$t$}] (t) at (10,6.25) {};
	\node[vertex,label={0:$b$}] (b) at (9,8) {};
  \node[vertex,label={0:$a$}] (a) at (7.5,6.5) {};
  \node[vertex,label={0:$p$}] (p) at (7,6) {};
	\node[vertex,label={0:$r$}] (r) at (7.5,5.25) {};
	\node[vertex,label={-90:$q$}] (q) at (9,4.5) {};
	
	%\draw[line width=2] (p) -- (r);
	\draw[line width=2] (r) -- (q);
	\draw[line width=2] (r) to (b);
  \draw[line width=2] (r) to (p);
  \draw[line width=2] (p) to (a);
  \draw[] (b) to (a);
  \draw[] (q) to (a);
	
	\draw[noedge] (x1) -- (nx1);
	\draw[noedge] (x2) -- (nx2);
	\draw[noedge] (x3) -- (nx3);
	\draw[noedge] (x4) -- (nx4);
  
  \draw[noedge] (c1) -- (c2);
  \draw[noedge] (c2) -- (c3);
  \draw[noedge, bend angle=20, bend right] (c1) to (c3);
	
	\draw[] (q) to (b);
	\draw (q) -- (t);

	\draw[noedge] (c1) -- (nx1);
	\draw[noedge] (c1) -- (x2);
	\draw[noedge] (c1) -- (nx3);
	\draw[noedge] (c2) -- (x1);
	\draw[noedge] (c2) -- (nx3);
	\draw[noedge] (c2) -- (x4);
	\draw[noedge] (c3) -- (nx1);
	\draw[noedge] (c3) to (nx3);
	\draw[noedge] (c3) -- (nx4);
	
	%\draw (p) -- (1.7,9) -- (p)--(2.1,9)--(p)--(2.3,9)--(p)--(1.9,9);
	
	%\draw[line width=2] (r) -- (3.7,9) -- (r)--(4.1,9)--(r)--(4.3,9)--(r)--(3.9,9);
	
  \draw[] (a) -- (6.5,6.55)--(a)--(6.5,6.45)--(a)--(6.5,6.35)--(a)--(6.5,6.65);
  
  \draw[] (p) -- (6.5,6.05)--(p)--(6.5,5.95)--(p)--(6.5,5.85)--(p)--(6.5,6.15);
  
	\draw[] (b) -- (7,8.1)--(b)--(7,7.9)--(b)--(7,7.7)--(b)--(7,8.3);
	
  \draw[line width=2] (r) -- (6.5,5.35) --(r)--(6.5,5.15)--(r)--(6.5,5.55)--(r)--(6.5,4.95);
  
	\draw[] (q) -- (6.5,4.6) --(q)--(6.5,4.8)--(q)--(6.5,4.4)--(q)--(6.5,4.2);
	
	\draw[line width=2] (b) -- (t);
	
	\end{tikzpicture}
 	\caption{The $ \mathcal{NP} $-completeness construction for the tree-recognition problem of MNS. The depicted graph is $G(\mathcal{I})$ for $\mathcal{I} = (\overline{x}_1 \lor x_2 \lor \overline{x}_3) \land ({x}_1 \lor \overline{x}_3 \lor {x}_4) \land (\overline{x}_1 \lor x_2 \lor \overline{x}_3)$. In both boxes only non-edges are displayed by dashed lines. The connection of a vertex with a box means, that the vertex is connected to all vertices in this box. Tree edges are depicted by thick edges.}\label{fig:mns-tree}
\end{figure}
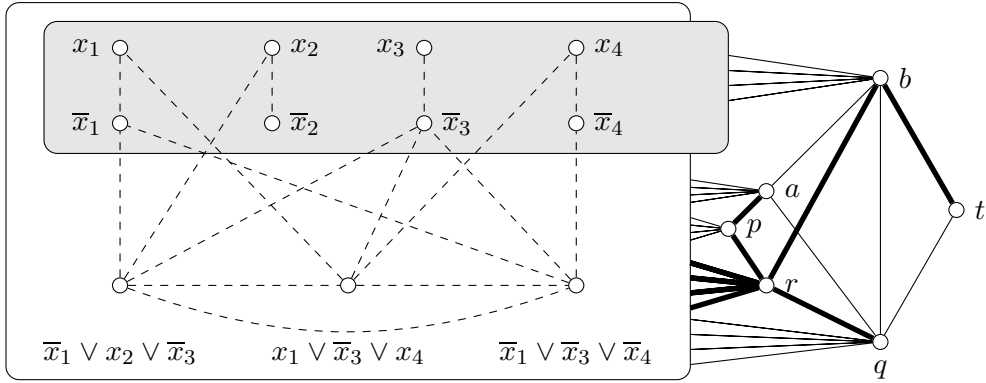

\begin{theorem}\label{theo:mns_npc}
The \cf-tree-recognition problem of MNS and MCS is \NP-complete on weakly chordal graphs.
\end{theorem}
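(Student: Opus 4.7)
Both problems are clearly in \NP: a claimed search ordering serves as a polynomial-size certificate whose validity and induced \cf-tree can be checked in polynomial time. Since every MCS ordering is an MNS ordering, my plan is to reduce 3-SAT so that the forward direction (satisfiability implies existence) produces an explicit MCS witness, while the reverse direction (existence implies satisfiability) applies to any MNS; this yields hardness for both searches simultaneously. I would use the construction depicted in Figure~\ref{fig:mns-tree}: for a 3-SAT instance $\I$ with variables $x_1,\ldots,x_k$ and clauses $C_1,\ldots,C_l$, the graph $G(\I)$ contains literal vertices $x_i,\overline{x_i}$, clause vertices $c_j$, and a fixed gadget $\{r,p,a,b,q,t\}$ with the adjacencies displayed, and the spanning tree $T(\I)$ is given by the bold edges, essentially a star at $r$ covering $p,b,q$ and all literal and clause vertices, together with the pendant edges $pa$ and $bt$.

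For the forward direction, given a satisfying assignment $\A$ I would produce an ordering that starts at $r$, then visits the literal of each variable that is true under $\A$, then $b$ (so that $t$'s leftmost visited neighbor becomes $b$), then the remaining literals, then every clause vertex $c_j$ (each of which has a previously visited literal neighbor thanks to $\A$ satisfying $C_j$, providing a sufficient MCS count), then $p$ (so that $a$'s leftmost visited neighbor is $p$), and finally $q$, $a$, and $t$. At every step I would verify that the chosen vertex has the largest number of previously visited neighbors, so the ordering is a valid MCS and therefore also a valid MNS, and that every vertex's leftmost visited neighbor coincides with its parent in $T(\I)$.

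For the reverse direction, assuming an MNS ordering $\sigma$ produces $T(\I)$ as its \cf-tree, I would first argue $\sigma(1)=r$, since $r$ is the unique vertex all of whose $G$-edges are tree edges (mimicking the LBFS case). Using the tree constraints I would then force $p$ before $b,q$ (because $a$'s parent must be $p$), $b$ before $q$ (because $t$'s parent must be $b$), and every clause vertex $c_j$ before $q$ (so that $q$ does not become the leftmost visited neighbor of some vertex it must not parent in $T(\I)$). The crux is then the MNS label analysis in the window before $q$: by inclusion-maximality I would show that if both $x_i$ and $\overline{x_i}$ were picked before $q$, the label of some clause vertex would strictly dominate the label of $q$, blocking $q$ from being MNS-maximal when required. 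Hence the literals visited before $q$ form a consistent partial truth assignment, and the fact that every clause vertex must precede $q$ forces each clause to contain a previously visited satisfying literal, yielding a satisfying assignment of $\I$.

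Finally, I would establish weak chordality of $G(\I)$ by adapting Lemma~\ref{lbfs:lemma3}: delete the simplicial vertex $t$ and the near-universal vertices $r,b$ using Lemma~\ref{intro:lemma2}, close the two-pairs $\{x_i,\overline{x_i}\}$ using Lemma~\ref{intro:lemma1}, remove further simplicial vertices, and reduce to a split graph. I expect the main obstacle to be the MNS label bookkeeping in the reverse direction: because MNS tolerates several incomparable maximal labels at each step, excluding alternative runs that reach $T(\I)$ without corresponding to a satisfying assignment requires simultaneous tracking of the labels of clause vertices, $q$, and the remaining gadget vertices, which is more delicate than the linear lexicographic argument used for LBFS in Section~\ref{sec:lbfs}.
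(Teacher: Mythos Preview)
Your proposal contains several genuine errors that break both directions of the reduction.

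\textbf{Forward direction.} Your ordering $r$, true literals, $b$, \ldots, $p$, $q$, $a$, $t$ does not produce $T(\I)$. In the construction of Figure~\ref{fig:mns-tree}, vertex $a$ is adjacent to every literal and clause vertex but \emph{not} to $r$. Hence if any literal is visited before $p$, that literal becomes the leftmost visited neighbor of $a$, and the tree edge $pa$ is lost. The paper therefore places $p$ immediately after $r$; indeed, the tree edge $pa$ forces $p$ to precede every literal, every clause vertex, $b$ and $q$, so $\sigma(2)=p$ is obligatory, not optional.

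\textbf{Reverse direction.} Your structural claim ``every clause vertex $c_j$ before $q$'' does not follow from the tree constraints: $q$ is a leaf of $T(\I)$, and all of $q$'s literal, clause and $b$ neighbors already have $r$ as their leftmost visited neighbor. The only constraints involving $q$ are $p\prec q$ (from $a$) and $b\prec q$ (from $t$). The paper's argument is different and hinges on $b$ versus the clause vertices: since $N(b)\subsetneq N(q)$ and $b$ is non-adjacent to every $c_j$ while $q$ is adjacent to all of them, once any $c_j$ is visited before $b$ the label of $q$ strictly dominates that of $b$ for the remainder of the search, forcing $q\prec b$ and destroying the tree edge $bt$. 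Combined with $\sigma(1)=r$, $\sigma(2)=p$, one obtains that the first $k$ literal vertices chosen form an assignment (each pair $x_i,\overline{x}_i$ cannot both be early, since the unchosen literal of a pair has a strictly smaller label than a still-unvisited clause vertex adjacent to all chosen literals). If the assignment is not satisfying, some clause vertex is labeled by every visited vertex, hence has the unique maximal label and must be taken before $b$, a contradiction.

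\textbf{Weak chordality.} Vertex $b$ is not near-universal: it is non-adjacent to $p$ and to every clause vertex, so Lemma~\ref{intro:lemma2} does not apply to it. The paper instead deletes $t$ (simplicial), then $r$ and $a$ (universal in the remainder), then $q$ and $p$ (now universal), and only then adds the edges $x_i\overline{x}_i$ to obtain a split graph with clique $X\cup\{b\}$ and independent set $C$.
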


For the proof we construct a polynomial reduction from 3-SAT. Let $\cal{I}$ be an instance of 3-SAT. We construct the corresponding graph $G(\mathcal{I})$ as follows (see Figure~\ref{fig:mns-tree} for an example): Let $ X=\{x_1, \dots, x_k,\overline{x}_1,\ldots,\overline{x}_k\} $ be the set of vertices representing the literals of $ \mathcal{I} $. The edge-set $ E(X) $ forms the complement of the matching in which $x_i$ is matched to $ \overline{x}_i $ for every $ i \in \{1,\ldots, k\} $. Let $ C = \{c_1, \ldots ,c_l\} $ be the set of vertices representing the clauses of $ \mathcal{I} $. The set $ C $ is independent in $G(\mathcal{I})$ and every $ c_i $ is adjacent to each vertex of $ X $, except those representing the literals of the clause associated with $ c_i $ for every $ i \in \{1, \ldots ,l \} $. Additionally, we add the vertices $r$, $p$, $q$, $a$, $b$ and $t$. The vertices $r$, $p$, $q$ and $a$ are adjacent to all literal vertices and all clause vertices and $b$ is adjacent to all literal vertices. Finally, we add the edges $ab$, $ap$, $aq$, $bq$, $br$, $bt$, $pr$, $ qr $ and $qt$. 

The spanning tree $T(\I)$ of $G(\I)$ consists of all edges incident to $r$ and the edges $pa$ and $bt$.

\begin{lemma}\label{lemma:mns_bclause}
If MNS or MCS generates the $\cf$-tree $T(\I)$ on $G(\I)$, it chooses $b$ before every clause vertex $c_i$.
\end{lemma}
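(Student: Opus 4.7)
The plan is to compare the MNS/MCS labels of $b$ and $q$ at the step when the search picks $b$, exploiting that $N(q)\setminus\{b\}$ strictly contains $N(b)\setminus\{q\}$, with difference exactly the clause set $C$. The whole argument is driven by the small gadget around $t$.

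First, I would observe that any search producing $T(\I)$ must place $b$ before $q$ in $\sigma$. Indeed, $N(t)=\{b,q\}$ and the only tree edge incident to $t$ in $T(\I)$ is $bt$; hence for $bt$ to be the $\cf$-edge of $t$, the vertex $b$ must be the leftmost (in fact, the unique) neighbor of $t$ that has been chosen at the step $\sigma^{-1}(t)$. This forces $b\prec_\sigma q$, and, in particular, $q$ is still unchosen at the step $j:=\sigma^{-1}(b)$.

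Next I would isolate the relevant neighborhoods from the construction: $N(b)=X\cup\{a,q,r,t\}$ and $N(q)=X\cup C\cup\{a,b,r,t\}$, where $X$ is the set of literal vertices and $C$ the set of clause vertices. In particular $N(b)\setminus\{q\}\subsetneq N(q)\setminus\{b\}$ with difference exactly $C$. Since neither $b$ nor $q$ has been chosen at the step just before $j$, the label of $b$ at that moment is the set of $\sigma$-positions of chosen vertices in $N(b)\setminus\{q\}$, and the label of $q$ is the set of $\sigma$-positions of chosen vertices in $N(q)\setminus\{b\}$. Consequently the label of $q$ contains that of $b$, strictly so if and only if at least one clause vertex has been chosen before step $j$.

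Now for the contradiction I would suppose that some $c_i$ is chosen before $b$. Then the label of $q$ strictly contains the label of $b$ at step $j$. For MNS this contradicts inclusion-maximality of $b$'s label, since the unchosen vertex $q$ strictly dominates it. For MCS it contradicts $|{\rm label}(b)|$ being maximum among unchosen vertices, since $|{\rm label}(q)|$ is strictly larger. In either search, $b$ could not have been selected at step $j$, the desired contradiction. Hence every clause vertex is chosen after $b$.

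The only subtle move is identifying $q$ as the right competitor for $b$, which is forced by the constraint $b\prec_\sigma q$ coming from the $t$-gadget; after that the label comparison reduces to a one-line set-theoretic inclusion, and the same reasoning handles MNS and MCS uniformly.
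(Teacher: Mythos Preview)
Your proof is correct and follows essentially the same route as the paper's: use the $t$-gadget to force $b\prec_\sigma q$, then exploit $N(b)\setminus\{q\}\subsetneq N(q)\setminus\{b\}$ with difference $C$ to derive a label-domination contradiction at the step $b$ is chosen. One tiny inaccuracy: your parenthetical ``(in fact, the unique)'' is not warranted, since both $b$ and $q$ may well be visited before $t$; but this does not affect the argument, as you only need (and correctly deduce) $b\prec_\sigma q$.
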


\begin{proof}
If we take the vertex $q$ before $b$, we will insert the edge $qt$ to the search tree, which is not an element of $T(\I)$. Thus, this is not allowed in a search that generates the $\cf$-tree $T(\I)$. The neighborhood of $b$ is properly contained in the neighborhood of $q$. Furthermore, $q$ is adjacent to each clause vertex, while $b$ is adjacent to none of them. Hence, if vertex $c_i$ is taken before $b$, then the label of $q$ will always be greater than the label of $b$ in both MNS and MCS and both searches will take $q$ before $b$.
\end{proof}

\begin{lemma}\label{lemma:mns_tree_assignment}
Let $\sigma$ be an MNS ordering of $G(\I)$ that generates the \cf-tree $T(\I)$. Then $\sigma(1) = r$, $\sigma(2) = p$ and $\sigma(i)$ for $3 \leq i \leq k + 2$ forms an arbitrary assignment of the variables (not necessarily satisfying).
\end{lemma}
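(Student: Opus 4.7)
My plan is to verify the three assertions in turn, each time combining the rooted structure of $T(\I)$ with the MNS label-maximality rule. For $\sigma(1)=r$ I would rule out every other candidate start vertex by case analysis: if $s\in\{p,q,a,b\}$ or $s$ is a literal or clause vertex, then $s$ is adjacent in $G$ to some literal vertex whose tree-parent in $T(\I)$ is $r$, so $s$ would become the leftmost visited neighbor of that literal, contradicting $T(\I)$; the literal case uses that the matched non-adjacency removes only one vertex from $N(s)$, and the clause case uses that a $3$-clause has at least $2k-3\ge 1$ literal non-neighbors outside itself. The only special case is $s=t$, where a short MNS trace shows that $\{b,q\}$ must be numbered before $r$, so the first subsequently picked literal inherits $b$ or $q$ (not $r$) as its leftmost visited neighbor. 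Having fixed $\sigma(1)=r$, after step $1$ the labels are $\{1\}$ on $N(r)=X\cup C\cup\{b,p,q\}$ and $\emptyset$ on $\{a,t\}$, so $\sigma(2)\in N(r)$; since $N(a)=X\cup C\cup\{b,p,q\}$ and the unique tree edge at $a$ is $pa$, any choice $\sigma(2)\ne p$ would itself be a visited neighbor of $a$ preceding $p$ and break $p_a=p$. Hence $\sigma(2)=p$.

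The core of the proof is the third assertion, and I begin by ruling out any clause pick before $b$. The crucial identity is $N(q)\setminus N(b)=C\cup\{b\}$ together with $N(b)\setminus N(q)=\{q\}$, so each clause pick appends its step-index to $\lab(q)$ but not to $\lab(b)$, whereas picks of shared neighbors extend both labels together. Consequently, the first clause pick already establishes $\lab(b)\subsetneq\lab(q)$, and this strict inclusion is preserved by every later pick up to (but not including) the moment $q$ is numbered. Since MNS only picks label-maximal vertices, $b$ cannot be numbered while $q$ remains unnumbered, so $q$ must precede $b$ in $\sigma$; but $T(\I)$'s only tree edge at $t$ is $bt$, forcing $b$ before $q$. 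This contradiction rules out any clause before $b$.

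To finish I show that positions $3$ through $k+2$ pick exactly one literal from each variable-pair. The key observation is that $\lab(a)$ always misses index $1$ (because $a\notin N(r)$) and $\lab(b),\lab(q),\lab(t)$ always miss index $2$ (because $b,q,t\notin N(p)$), whereas every literal's label contains $\{1,2\}$. Thus $a,b,q,t$ are strictly dominated by any ``full'' literal label $\{1,2,\dots,i-1\}$ at step $i$, and such a full label is realized by any literal whose negation has not yet been picked -- which exists at every step $i\in[3,k+2]$ since strictly fewer than $k$ pairs have been touched by then. Combined with the clause exclusion, this forces $\sigma(i)$ to be a literal for every $i\in[3,k+2]$. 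Finally, once a literal $\alpha$ is chosen at step $s\le k+2$, the label of $\overline{\alpha}$ permanently lacks $s$ (as $\overline{\alpha}$ is the unique literal non-neighbor of $\alpha$), so $\overline{\alpha}$ is itself a strict subset of any remaining full label and cannot be picked within the range. Inductively the $k$ chosen literals lie in $k$ distinct pairs and hence constitute an assignment. The main technical obstacle is the clause-before-$b$ step: $\lab(b)$ is individually incomparable with the labels of literals and of the $\overline{\alpha_i}$, which might suggest $b$ could still become maximal at some later moment, but the crucial point is that once a clause is picked the containment $\lab(b)\subsetneq\lab(q)$ is irreversible while $q$ remains unnumbered.
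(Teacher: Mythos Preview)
Your argument is correct and essentially mirrors the paper's proof: force $r$ then $p$ via the tree constraints, invoke the ``no clause before $b$'' principle (which the paper isolates as the preceding lemma and you re-derive inline via the $\lab(b)\subsetneq\lab(q)$ containment), and conclude that only full literal labels can be maximal at each step $3\le i\le k+2$. Your case analysis for $\sigma(1)=r$ is more elaborate than needed---the paper simply observes that the root of an \cf-tree must have all its incident $G$-edges in $T(\I)$ and that $r$ is the unique such vertex, which disposes of $s=t$ immediately (since $tq\in E(G)\setminus E(T(\I))$) without any MNS trace.
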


\begin{proof}
	Any MNS resulting in the search tree $T(\I)$ must start in $r$, since every other vertex is incident to an edge in $G(\I)$ which is not an element of $T(\I)$. Since $a$ is adjacent to every neighbor of $r$ in $G(\I)$ but only to $p$ in $T(\I)$, the search has to choose $p$ as the next vertex. Now the literal vertices and the clause vertices have the unique maximal label, since they were labeled both by $r$ and $p$ and every other vertex was labeled by at most one of these two vertices. Because of Lemma~\ref{lemma:mns_bclause} we cannot take a clause vertex. Thus, we have to take a literal vertex. With the same argumentation it follows that we have to take a whole assignment, since the literal vertices of variables whose two literal vertices have not yet been chosen always have the unique maximal label.  
	%After taken $r$ all literal vertices and clause vertices and $p$ have the same label. If MNS takes a vertex $v \neq p$, it would insert the edge $v,q$ to the search tree which is not part of $T(\I)$. Now all literal vertices and clause vertices have the unique maximum label. If MNS would choose a clause vertex $c_i$ before $b$, the label of $q$ would become greater than the label of $b$, and thus, $q$ must be taken before $b$. Therefore, MNS would insert the edge $q,t$ to the search tree which is not part of $T(\I)$. Furthermore, MNS cannot take the vertex $b$ before a literal vertex was taken for every variable, since these vertices are connected to all vertices that are chosen before and $b$ is not connected to $s$. Additionally we cannot take both literal vertices of a variable before we have taken a whole assignment, since in the label of the second literal vertex the first chosen vertex is missing. Thus, the lemma holds.
\end{proof}

\begin{lemma}\label{mns:lemma1}
If $\I$ has a satisfying assignment $\A$, then $T(\I)$ is an \cf-tree of MCS on $G(\I)$ and, therefore, also an \cf-tree of MNS.
\end{lemma}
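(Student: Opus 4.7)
The plan is to exhibit an explicit MCS ordering $\sigma$ of $V(G(\I))$ that realizes $T(\I)$ as its \cf-tree. Let $L$ be the set of $k$ literal vertices set to true by the satisfying assignment $\A$, and let $L'=X\setminus L$. The ordering I will use begins with $r,p$, continues with the vertices of $L$ in arbitrary internal order, then $b$, then the vertices of $L'$ in arbitrary order, then $q$, then $a$, then the clause vertices $c_1,\ldots,c_l$ in arbitrary order, and ends with $t$.

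Showing that $\sigma$ has $T(\I)$ as its \cf-tree is a direct case analysis on the non-root vertices. Since $r$ comes first and is adjacent in $G(\I)$ to $p$, $q$, $b$ and to every literal and clause vertex, each of these has $r$ as its leftmost visited neighbor, accounting for all tree edges incident to $r$. Vertex $a$ has $G(\I)$-neighborhood $\{p,q,b\}\cup X\cup C$, and its earliest neighbor in $\sigma$ is $p$ at position $2$, yielding the tree edge $pa$. Vertex $t$ has neighborhood $\{b,q\}$, and $b$ precedes $q$ in $\sigma$, yielding the tree edge $bt$.

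The main obstacle will be verifying that $\sigma$ is a legitimate MCS ordering, and it is concentrated at the step where $b$ is chosen. At that moment $b$ carries label $k+1$ (one from $r$ plus $k$ from the already visited elements of $L$, since $b$ is adjacent to all of $X$), and I need to show that $k+1$ is the current maximum. Each clause vertex $c_i$ has label $2+(k-s_i)$, where $s_i$ is the number of literals of $C_i$ that $\A$ sets to true. Because $\A$ satisfies $\I$, we have $s_i\geq 1$, hence $c_i$'s label is at most $k+1$. The vertices $a$, $q$ and every member of $L'$ also carry label exactly $k+1$, so $b$ is tied for the maximum and may legally be chosen. This is the only point at which satisfiability of $\A$ is used.

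All other MCS verifications are routine bookkeeping. Before the $b$-step, $r$ and $p$ are forced by trivial ties, and inside the block $L$ every remaining literal is tied at the current maximum because all literal vertices share their early neighbors among $\{r,p\}$ together with the already processed elements of $L$. After $b$, the members of $L'$ share the maximum label $k+2$ with $a$ and $q$, so the block $L'$ proceeds analogously; once $X$ is exhausted, $q$ and $a$ are tied at $2k+2$, $a$ becomes the unique maximum at $2k+3$ once $q$ is selected, the clauses then share the common value $2k+1$ and may be taken in any order, and $t$ is the only vertex remaining. Finally, since every MCS ordering is also an MNS ordering (as recalled in Section~\ref{sec:prelim}), the same $\sigma$ simultaneously witnesses $T(\I)$ as an \cf-tree of MNS.
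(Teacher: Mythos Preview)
Your proof is correct and follows essentially the same strategy as the paper: start with $r$, then $p$, then the $k$ true literals $L$, observe that satisfiability forces every clause label down to at most $k+1$ so that $b$ (with label $k+1$) may legally be taken next, after which all tree edges are already fixed. The paper stops there and simply notes that the remaining vertices may be taken in any MCS-legal order; you instead spell out an explicit completion $L',q,a,C,t$ and verify the label counts at each stage, which is a harmless (and arguably cleaner) elaboration of the same argument.
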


\begin{proof}
	In the following we give a search order which results in the desired search tree $ T(\I) $. We start with $r$ and then we take $p$. By doing this, we insert every edge of $T(\I)$ apart from $bt$ to the search tree. Next, we take the literal vertices which correspond to the assignment $\A$ in an arbitrary order. As a result, the labels of all literal vertices and of the vertices $a$, $b$ and $q$ are equal to $k + 1$. Since $\A$ is satisfying, each clause vertex was not labeled by at least one of the chosen literal vertices. Hence, it has a label $\leq k +1$ and we can take $b$ as the next vertex and insert the last missing edge of $T(\I)$. The remaining vertices can be chosen in any possible order, as they do not influence the search tree. 
\end{proof}

\begin{lemma}\label{mns:lemma2}
If $\I$ does not have a satisfying assignment, then $T(\I)$ is not an MNS \cf-tree of $G(\I)$ and, therefore, also not an MCS \cf-tree.
\end{lemma}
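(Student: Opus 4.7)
The plan is to derive a contradiction from the assumption that some MNS ordering $\sigma$ of $G(\I)$ has $T(\I)$ as its \cf-tree. By Lemma~\ref{lemma:mns_tree_assignment}, such a $\sigma$ must start with $\sigma(1)=r$, $\sigma(2)=p$, and then use positions $3$ through $k+2$ to list literal vertices $\A=\{l_1,\ldots,l_k\}$ forming a truth assignment of the variables. Since $\I$ is unsatisfiable, $\A$ must leave some clause $C_j$ unsatisfied, which means that none of the three literals of $C_j$ occurs in $\A$.

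The central step is to compare the MNS labels of the unvisited vertices at step $k+3$ and to show that the clause vertex $c_j$ carries a strictly maximum label under set inclusion. I would first note that $c_j$ is adjacent to $r$, to $p$, and -- because none of $C_j$'s literals belongs to $\A$ -- to every chosen literal $l_i$; hence the label of $c_j$ equals the full set of positions $\{1,2,\ldots,k+2\}$. For every other unvisited vertex I would then check, by reading off the construction of $G(\I)$, that its label is a proper subset of this set: the labels of $b$ and $q$ both miss position~$2$ since $p$ is not adjacent to either; the label of $a$ misses position~$1$ since $r$ is not adjacent to $a$; each unvisited literal vertex $\overline{l}_i$ is non-adjacent only to $l_i$ among the already numbered vertices, and therefore misses the position of $l_i$; every satisfied clause vertex $c_{j'}$ is non-adjacent to at least one literal of $C_{j'}$ that lies in $\A$, and hence misses that literal's position; and $t$ is labeled by no visited vertex at all.

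Consequently, any MNS extension of the prefix $r,p,l_1,\ldots,l_k$ is forced to choose some unsatisfied clause vertex at step $k+3$, necessarily before $b$ is visited. This contradicts Lemma~\ref{lemma:mns_bclause}, which requires $b$ to precede every clause vertex whenever $T(\I)$ is generated. Hence no MNS ordering can realize $T(\I)$, and since every MCS ordering is also an MNS ordering (as recalled in Section~\ref{sec:prelim}), the same conclusion holds for MCS. The only real subtlety is ensuring that the label of $c_j$ strictly dominates, rather than merely ties with, the labels of all non-clause unvisited vertices; the gadget is engineered precisely so that every non-clause unvisited vertex is non-adjacent to at least one already numbered vertex in $\{r,p,l_1,\ldots,l_k\}$.
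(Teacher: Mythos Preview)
Your argument is correct and follows essentially the same route as the paper's proof: invoke Lemma~\ref{lemma:mns_tree_assignment} to fix the prefix $r,p,l_1,\ldots,l_k$, observe that an unsatisfied clause vertex then carries the full label while every non-clause vertex misses at least one already visited neighbor, and conclude via Lemma~\ref{lemma:mns_bclause}. The only difference is that you spell out the case analysis for $a$, $b$, $q$, $t$, the remaining literal vertices, and the satisfied clause vertices explicitly, whereas the paper summarizes this in a single sentence.
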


\begin{proof}
	Assume that $T(\I)$ is an MNS \cf-tree of $G(\I)$. By Lemma~\ref{lemma:mns_tree_assignment} we have to start with $r$, then $p$ and, next, the literal vertices that correspond to an arbitrary assignment. Since this assignment cannot be satisfying, there is at least one clause vertex which was labeled by every vertex chosen up till now. In the label of every non-clause vertex at least one chosen vertex is missing. Thus, we have to visit a clause vertex next. This contradicts Lemma~\ref{lemma:mns_bclause}.
\end{proof}

\begin{lemma}\label{mns:lemma3}
	For every instance $\I$ of 3-SAT the graph $G(\I)$ is weakly chordal.
\end{lemma}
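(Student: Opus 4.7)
The plan is to mirror the proof of Lemma~\ref{lbfs:lemma3}, applying Lemmas~\ref{intro:lemma1} and~\ref{intro:lemma2} in sequence to reduce $G(\I)$ to a split graph, which is chordal and hence weakly chordal.

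The key step, which I expect to be the main obstacle, is to show that each pair $\{x_i, \overline{x}_i\}$ is a two-pair in $G(\I)$. Consider an arbitrary induced path $x_i, v_1, \ldots, v_s, \overline{x}_i$ with $s \geq 2$. Since $v_1 \in N(x_i) \setminus N(\overline{x}_i)$, it must be a clause vertex $c_j$ whose associated clause contains $\overline{x}_i$; by symmetry, $v_s$ is a clause vertex whose clause contains $x_i$. If $s = 2$, then $v_1v_2$ would be an edge between two vertices of the independent set $C$, a contradiction. If $s \geq 3$, then $v_2 \in N(c_j) = \{\ell \in X : \ell \notin C_j\} \cup \{r, p, q, a\}$, and each of these vertices is adjacent to $x_i$, producing a chord $x_i v_2$ of the path. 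Hence no induced $x_i$--$\overline{x}_i$ path of length at least $3$ exists.

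Applying Lemma~\ref{intro:lemma1} iteratively, we may add all edges $x_i\overline{x}_i$ while preserving weak chordality, obtaining a graph $H$ in which $X$ becomes a clique. The iteration is justified because adding $x_j\overline{x}_j$ does not alter $N(x_i)$, $N(\overline{x}_i)$, or $N(c_k)$ for any $i \neq j$ and any $k$, so the two-pair argument above remains valid for each pair still to be handled.

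Finally, I would apply Lemma~\ref{intro:lemma2} repeatedly to prune $H$. The vertex $t$ is simplicial since its neighbors $b$ and $q$ are adjacent. After deleting $t$, the vertex $r$ is adjacent to every remaining vertex except $a$, so it has $n-2$ neighbors. After deleting $r$, the neighborhood $N_H(b) = X \cup \{a,q\}$ is a clique (since $X$ is a clique in $H$ and both $a$ and $q$ are adjacent to every literal and to each other), making $b$ simplicial. After deleting $b$, the vertex $a$ becomes universal. After deleting $a$, the vertices $p$ and $q$ each have at least $n-2$ neighbors and can be removed in turn. What remains is the subgraph induced by $X \cup C$, which is a split graph with clique $X$ and independent set $C$. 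This graph is chordal and hence weakly chordal, so $G(\I)$ is weakly chordal as well.
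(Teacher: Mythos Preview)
Your proof is correct and follows essentially the same strategy as the paper: reduce $G(\I)$ to a split graph by a combination of Lemma~\ref{intro:lemma1} (adding the matching edges $x_i\overline{x}_i$) and repeated applications of Lemma~\ref{intro:lemma2} (peeling off simplicial or near-universal vertices). The only cosmetic difference is the order of operations: the paper first deletes $t$, $r$, $a$, $q$, $p$ and then verifies the two-pair property on the small remaining graph $X\cup C\cup\{b\}$, whereas you establish the two-pair property in the full graph $G(\I)$ first, add the edges, and afterwards delete $t$, $r$, $b$, $a$, $p$, $q$; both routes terminate in the same split graph on $X\cup C$ (the paper keeps $b$ in the clique part instead).
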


\begin{proof}
	To begin with, we will use Lemma~\ref{intro:lemma2} to delete some vertices which cannot be part of a cycle of length $\geq 5$ in $G(\I)$ or its complement. We can delete $t$, since it is simplicial. Now the vertices $r$ and $a$ are adjacent to every other vertex and, therefore, we can delete these as well. In the resulting graph we can use the same argumentation to delete $q$ and $p$. The remaining graph only contains the literal vertices, the clause vertices and $b$. Since $x_i$ and $\overline{x}_i$ form a two-pair for every $1 \leq i \leq k$, we can add the edges $x_i\overline{x}_i$, due to Lemma~\ref{intro:lemma1}. The resulting graph is a split graph, where $X \cup \{b\}$ forms the clique and $C$ forms the independent set. Thus, it is weakly chordal.
\end{proof}

Theorem~\ref{theo:mns_npc} follows immediately from Lemma~\ref{mns:lemma1}, Lemma~\ref{mns:lemma2} and Lemma~\ref{mns:lemma3}.

%\subsection{Split graphs}

\section{Linear Time Algorithms for Split Graphs}\label{sec:split}

Surprisingly, for split graphs the set of \cf-trees is the same for the searches BFS, MNS, MCS, and LDFS, even though this does not hold for the respective search orders. We exploit this special structure to derive a linear time algorithm for split graphs. Note that LDFS is considered together with an \cf-tree.

\begin{theorem}
	A tree $T$ is an \cf-tree of BFS on a split graph $G$ if and only if it is an \cf-tree of MNS (MCS, LBFS, LDFS).
\end{theorem}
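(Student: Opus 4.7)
The plan is to identify a single combinatorial characterization of candidate \cf-trees on a split graph and show that, for each of the five searches, a tree is an \cf-tree if and only if it meets this characterization. Since LBFS-\cf-trees are a subset of BFS- and MNS-\cf-trees, and MCS- and LDFS-\cf-trees are subsets of MNS-\cf-trees (Figure~\ref{fig:searches}), establishing this common characterization simultaneously proves all the pairwise equalities claimed in the theorem.

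I would first derive the characterization from the structure of an \cf-tree on a split graph $G = (C \cup I, E)$ rooted at the start vertex $r$. If $r \in C$, then $r$ is adjacent to every $v \in C \setminus \{r\}$, so $r$ is necessarily $v$'s earliest visited neighbor and hence $v$ is a child of $r$; the same applies to every $v \in I \cap N(r)$. The remaining vertices $v \in I \setminus N(r)$ satisfy $N(v) \subseteq C \setminus \{r\}$ and are children of the first vertex of $N(v)$ visited by the search. If instead $r \in I$, then the children of $r$ must be exactly $N(r) \subseteq C$, each $c \in C \setminus N(r)$ is a child of the second visited vertex (which lies in $N(r) \subseteq N(c)$), and each $w \in I \setminus \{r\}$ is a child of the first visited vertex in $N(w)$. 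In both cases $T$ is fully determined by the visiting order of the clique vertices, together with the identity of the second visited vertex when $r \in I$.

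The main step is then to show that each of the five searches can realize any clique order consistent with $T$. For LBFS, MNS, MCS, and LDFS the key observation is that whenever any unvisited clique vertex is a maximum-label candidate, every unvisited clique vertex is also such a candidate: any two of them are mutually adjacent and share an identical history of visited neighbors, so they carry identical labels under each of the four labelling schemes (lex-strings, set inclusion, integers, reversed lex-strings). Tie-breaking in favor of clique vertices then ensures that the clique is visited before any remaining independent-set vertex and can be processed in any desired order. For BFS the same freedom is immediate. Once the clique has been visited in the prescribed order, each independent-set vertex is visited after all of its clique-neighbors, so its parent in the \cf-tree is the first vertex of its neighborhood in the clique order, matching $T$.

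The hard part is the uniform label analysis in the third step, which must cover four distinct labelling schemes. The most subtle case is LDFS: one must show that although an independent-set vertex $w$ may accumulate a long label through repeated updates, an unvisited clique vertex $c$ always has a label that lexicographically dominates or ties with that of $w$, because $c$ is adjacent to every visited clique vertex and so the entries of its label form a superset of the entries of $w$'s label. The case analysis must also separately handle the root-in-$C$ and root-in-$I$ cases, and in the latter case keep track of the second visited vertex so as to correctly assign the forced parents of the distance-two clique vertices in $T$.
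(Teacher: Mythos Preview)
Your approach is essentially the paper's: both arguments show that on a split graph any of these searches can visit the root first, then the entire clique in a prescribed order, then the remaining independent vertices, so that the \cf-tree is determined by the root and the clique order alone; the paper phrases this as a direct order-transformation from a BFS order to an MNS order, while you package it as a common combinatorial characterization and use the search hierarchy to cut down on directions. One technical point to tighten: when $r\in I$, your claim that any two unvisited clique vertices ``share an identical history of visited neighbors'' is false---those in $N(r)$ carry $r$ in their label and those in $C\setminus N(r)$ do not---so the clique cannot be processed in a completely arbitrary order; you must exhaust $N(r)$ before $C\setminus N(r)$, but since every BFS order already does this (first layer before second), the argument still goes through once you make this constraint explicit.
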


\begin{proof}
	Let $ G=(V,E) $ be a split graph and let $ T $ be an \cf-tree for BFS on $G$, generated by the order $\tau$. Let $I=\{i_1,\dots,i_{\ell}\}$ be the independent set and $C=\{c_1,\dots,c_k\}$ be the clique of $ G $. We  show that there is an MNS ordering $ \sigma $ that generates a search tree that coincides with $ T $.
	
	Suppose $\tau$ starts with a clique vertex, without loss of generality~$c_1$, that is, $c_1$ is the root of the search tree. Then, all other clique vertices $c_2$ to $c_k$ are in the first layer of the \cf-tree, and additionally, all independent set vertices which are adjacent to $c_1$ are in the first layer as well. Without loss of generality, $i_1$ to $i_q$ are adjacent to $c_1$. Then $i_{q+1}$ to $i_{\ell}$ are in the second layer of the tree $T$. Furthermore, suppose $c_2$ to $c_k$ are indexed in the order of occurrence in the BFS order. Note that BFS may choose $i_1$ to $i_q$ in arbitrary order before the last clique vertex is chosen.
	
	Now, we construct an MNS order $\sigma$, such that the \cf-tree of $\sigma$ is $T$. We simply pick $c_1$ to $c_k$ in ascending order, that is, we start with the same root $c_1$, followed by the clique vertices in unchanged order. Since all vertices in the clique have the same neighborhood of visited vertices at every step and none of the $i_x$ has a larger neighborhood, this does not contradict the MNS search paradigm. Finally, we add the independent set vertices to $\sigma$. Here, we have to choose the independent vertices with larger neighborhoods first. As the whole neighborhood of each of these vertices is already chosen, this does not change the edges of the tree, i.e., the first visited neighbor. Since the neighbors of the independent set vertices are visited in the same order as in the BFS, the same \cf-tree $T$ is generated.
	
	Now suppose that  $\tau$ starts with an independent vertex and, without loss of generality, 
	we label the root of the search tree $T$ by $i_1$. Then the neighbors of $i_1$, say $c_1$ to $c_q$ are in the first layer of the search tree. All other clique vertices and all independent set vertices which are neighbors of $c_1$ to $c_q$ are in the second layer of the \cf-tree $T$. Finally, all remaining independent set vertices are in third layer. Again note that $c_1$ to $c_k$ are assumed to be indexed in the order of occurrence in the BFS order.
	
	Again, a similar order $\sigma$, now starting with $i_1$, followed by $c_1$ to $c_k$ in order of the indices, and afterwards followed by $i_2$ to $i_{\ell}$, respecting neighborhood inclusions, yields the same tree $T$ and it is an MNS order analogous to the above argumentation.
	
	The proof for the other direction can be achieved in the same way. The proofs for MCS, LBFS, and LDFS also follow the same pattern.
\end{proof}

As the \cf-tree problem can be solved in linear time for BFS~\cite{manber1990}, this, therefore, also holds for the other searches.

\begin{corollary}
	The \cf-tree problem of MNS, MCS, LBFS and LDFS can be solved in linear time.
\end{corollary}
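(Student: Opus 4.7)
The plan is immediate once one has the preceding theorem in hand and recalls Manber's linear-time algorithm for the BFS \cf-tree recognition problem~\cite{manber1990}. Given a split graph $G$ together with a spanning tree $T$, I would simply invoke Manber's algorithm on the pair $(G,T)$ and output its verdict.

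Correctness is a direct consequence of the biconditional in the preceding theorem: on a split graph, the set of \cf-trees of BFS coincides with the set of \cf-trees of MNS, MCS, LBFS and LDFS, so $T$ is an \cf-tree of any one of these searches on $G$ if and only if it is an \cf-tree of BFS on $G$. Consequently, Manber's algorithm accepts $(G,T)$ exactly when $T$ is an \cf-tree of the search in question. The running time is linear because Manber's procedure itself runs in linear time; if one wishes to treat the input graph as arbitrary rather than promised to be a split graph, one can prepend a linear-time split-graph recognition step, for instance via the degree-sequence characterization of Hammer and Simeone.

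There is no real obstacle: the theorem has already done the structural work of collapsing the four search paradigms to BFS on split graphs, and Manber's result supplies the algorithmic engine. The corollary is obtained simply by composing these two facts, and no further case analysis or combinatorial argument is needed.
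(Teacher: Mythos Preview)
Your proposal is correct and follows essentially the same approach as the paper: the paper's justification is the single sentence ``As the \cf-tree problem can be solved in linear time for BFS~\cite{manber1990}, this, therefore, also holds for the other searches,'' which is exactly the reduction you describe. Your remark about prepending a linear-time split-graph recognition step is a reasonable addition but not discussed in the paper.
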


In order to fully characterize \cl-trees on split graphs for all the investigated MNS-type searches, we first need two lemmas about their search orders. The first is a typical 3-point condition given by Corneil and Krueger~\cite{corneil2008unified}.

\begin{lemma}\cite{corneil2008unified}\label{lemma:mns_3point}
An ordering $\sigma$ of $V$ is an MNS-ordering if and only if the following statement holds: If $a \prec_{\sigma} b \prec_{\sigma} c$ and $ac \in E$ and $ab \notin E$, then there exists a vertex $d$ with $d \prec_{\sigma} b$ and $db \in E$ and $dc \notin E$.
\end{lemma}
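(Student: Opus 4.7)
The plan is to prove both directions by directly inspecting the labels MNS assigns. Recall that when MNS reaches step $i$, every still-unnumbered vertex $v$ carries a label equal to $\{\,j : j<i,\ \sigma(j)v\in E\,\}$, and MNS may pick any unnumbered $v$ whose label is inclusion-maximal.

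For the forward direction, assume $\sigma$ is an MNS ordering and fix $a,b,c$ with $a\prec_\sigma b\prec_\sigma c$, $ac\in E$, $ab\notin E$. Let $i=\sigma(b)$. At step $i$ both $b$ and $c$ are unnumbered, so $\mathrm{label}(b)$ must be inclusion-maximal; in particular $\mathrm{label}(b)\not\subsetneq\mathrm{label}(c)$. Because $a\prec_\sigma b$, $ac\in E$ and $ab\notin E$, we have $\sigma(a)\in\mathrm{label}(c)\setminus\mathrm{label}(b)$. Since $\mathrm{label}(b)$ is not strictly contained in $\mathrm{label}(c)$, there must exist some $j\in\mathrm{label}(b)\setminus\mathrm{label}(c)$; setting $d=\sigma(j)$ yields a vertex with $d\prec_\sigma b$, $db\in E$ and $dc\notin E$, as required.

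For the converse, suppose the 3-point condition holds. I will verify by induction on $i$ that MNS can validly choose $\sigma(i)$ at step $i$. Assume steps $1,\dots,i-1$ have produced the same labels as $\sigma$ would, and let $b=\sigma(i)$. If $\mathrm{label}(b)$ is not inclusion-maximal among unnumbered vertices, then there is some unnumbered $c$ with $\mathrm{label}(b)\subsetneq\mathrm{label}(c)$. Pick any $j\in\mathrm{label}(c)\setminus\mathrm{label}(b)$ and set $a=\sigma(j)$. Then $a\prec_\sigma b\prec_\sigma c$, $ac\in E$, $ab\notin E$, so the 3-point condition supplies a vertex $d$ with $d\prec_\sigma b$, $db\in E$, $dc\notin E$. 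But then $\sigma(d)\in\mathrm{label}(b)\setminus\mathrm{label}(c)$, contradicting the assumed strict inclusion. Thus $\mathrm{label}(b)$ is maximal and MNS may legitimately choose $b$, completing the induction.

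No real obstacle is expected; the proof is essentially a bookkeeping exercise that identifies ``$a\in\mathrm{label}(c)\setminus\mathrm{label}(b)$'' with the hypothesis of the 3-point condition and ``$d\in\mathrm{label}(b)\setminus\mathrm{label}(c)$'' with its conclusion. The only subtle point is making sure that, in the inductive step, the candidate $c$ used to derive a contradiction is genuinely unnumbered at step $i$ (so that $b\prec_\sigma c$ holds and the 3-point condition applies); this is immediate because $\mathrm{label}(c)$ is considered precisely at step $i$ as an unnumbered vertex's label.
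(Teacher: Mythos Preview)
Your argument is correct. Note, however, that the paper does not supply its own proof of this lemma: it is quoted verbatim from Corneil and Krueger~\cite{corneil2008unified} and stated without proof, so there is nothing in the present paper to compare your approach against. What you have written is the standard derivation of the MNS 4-point characterization---reading off the witness $d$ from $\mathrm{label}(b)\setminus\mathrm{label}(c)$ at the step where $b$ is selected, and conversely using the condition to exclude any strict inclusion $\mathrm{label}(b)\subsetneq\mathrm{label}(c)$ that would prevent $b$ from being chosen.

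One cosmetic remark: you invoke $\sigma$ in both directions, writing $i=\sigma(b)$ (vertex $\mapsto$ position) and $d=\sigma(j)$ (position $\mapsto$ vertex). The paper's preliminaries fix $\sigma\colon V\to\{1,\dots,n\}$, so the latter should strictly be $\sigma^{-1}(j)$; the intent is clear, but aligning the notation would avoid confusion.
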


The following lemma gives some information about the position of elements of the independent set $I$ in an MNS-ordering of a split graph. We show that, whenever a vertex $v$ of $I$ is to the left of some vertex of the clique $C$, every vertex of $C$ to the left of $v$ has to be a neighbor of $v$ and all remaining neighbors of $v$ have to be chosen directly after $v$.

\begin{lemma}\label{lemma:split_order}
Let $G = (V,E)$ be a split graph with clique $C$ and independent set $I$. Let $\sigma = (v_1, \ldots, v_n)$ be an ordering of $V$. If $\sigma$ is an MNS-ordering, then it holds for every pair of vertices $v_i \in I$ and $v_j \in C$ with $j > i$ that:
\begin{enumerate}
\item $\{v_1, \ldots, v_{i-1}\} \cap C \subseteq N(v_i)$ with $|\{v_1, \ldots, v_{i-1}\} \cap C| = l$
\item $v_{i+1}, \ldots, v_{deg(v_i) - l} \subseteq N(v_i)$
\end{enumerate}
\end{lemma}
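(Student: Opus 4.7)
The plan is to prove the two parts in sequence, using the 3-point condition of Lemma~\ref{lemma:mns_3point} for Part 1 and then combining Part 1 with a direct analysis of MNS labels for Part 2.

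For Part 1, I would proceed by contradiction. Suppose some $v_k \in \{v_1,\dots,v_{i-1}\} \cap C$ is not adjacent to $v_i$. Then $v_k \prec_\sigma v_i \prec_\sigma v_j$ with $v_k v_j \in E$ (both lie in the clique $C$) and $v_k v_i \notin E$, so Lemma~\ref{lemma:mns_3point} yields a vertex $d \prec_\sigma v_i$ with $d v_i \in E$ and $d v_j \notin E$. Since $v_i \in I$ and $I$ is independent, $d$ must lie in $C$; but then $d$ and $v_j$ are both clique vertices, hence adjacent---a contradiction. Setting $l := |\{v_1,\dots,v_{i-1}\} \cap C|$ completes Part 1.

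For Part 2, I would translate the MNS labelling rule into structural information at the moment $v_i$ is chosen. Writing $S = \{v_1,\dots,v_{i-1}\}$ and $\ell(\cdot)$ for the current label, Part 1 implies $\ell(v_i) = \{k : v_k \in C \cap S\}$. Every unvisited $u \in C$ is adjacent to all visited clique vertices, so $\ell(u) \supseteq \ell(v_i)$; inclusion-maximality of $\ell(v_i)$ forces equality, and in particular no unvisited clique vertex has a visited $I$-neighbor at step $i$. I would then argue by induction on $m$ that at step $i+m+1$ for $0 \le m < \deg(v_i)-l$, the unvisited clique neighbors of $v_i$ carry label $\ell(v_i) \cup \{i, i+1, \dots, i+m\}$, every unvisited clique non-neighbor of $v_i$ carries $\ell(v_i) \cup \{i+1, \dots, i+m\}$, and every unvisited $I$-vertex distinct from $v_i$ has label contained in $\ell(v_i) \cup \{i+1, \dots, i+m\}$. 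Since only the first group contains the index $i$, MNS must pick from that group next, which advances the induction and forces the vertices $v_{i+1}, \dots, v_{i+\deg(v_i)-l}$ all to lie in $N(v_i)$.

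The main obstacle will be the bookkeeping of label evolution during this inductive stretch: one must verify that the index $i$ serves as a distinguishing witness between unvisited clique neighbors of $v_i$ and every other unvisited vertex, and that the key observation ``no unvisited clique vertex has a visited $I$-neighbor'' persists throughout the induction. This persistence is immediate because during steps $i+1, \dots, i+\deg(v_i)-l$ no new $I$-vertex is added to the visited set---only clique neighbors of $v_i$ are chosen---so $v_i$ remains the unique visited $I$-vertex relevant for the label of any still-unvisited $u \in C$, contributing the index $i$ exactly when $u \in N(v_i)$.
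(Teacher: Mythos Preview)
Your Part~1 is exactly the paper's argument. Your Part~2 is correct but takes a genuinely different route. The paper does \emph{not} unpack the MNS labels; instead it reuses Lemma~\ref{lemma:mns_3point} a second time via a minimal-counterexample argument: take the $\sigma$-leftmost $v_i\in I$ for which condition~2 fails, pick the first non-neighbor $b$ of $v_i$ to its right and some later neighbor $c\in C$, apply the 3-point condition to the triple $v_i\prec_\sigma b\prec_\sigma c$, and obtain a witness $d\prec_\sigma b$ with $db\in E$, $dc\notin E$. One then argues $d\prec_\sigma v_i$ (any vertex strictly between $v_i$ and $b$ lies in $C$ and is hence adjacent to $c$) and $d\in I$ (since $dc\notin E$ and $c\in C$), so that the non-neighbor $v_i$ sits between $d$ and its neighbor $b$, contradicting the minimality of $v_i$. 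This gives a two-paragraph proof that is uniform with Part~1 and never mentions labels at all. Your approach, by contrast, is operational: you show directly that the index $i$ is a permanent distinguishing mark in the labels of the remaining clique neighbors of $v_i$ throughout the next $\deg(v_i)-l$ steps, forcing MNS to exhaust them before touching anything else. This is longer and requires the inductive bookkeeping you describe, but it has the advantage of making the mechanism completely explicit and of not depending on the 3-point characterization for this half. Both arguments are sound; the paper's is shorter, yours is more self-contained.
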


\begin{proof}
Assume that $\sigma$ is an MNS-ordering and does not fulfill one of the two conditions, i.e., there is a pair of vertices $v_i \in I$ and $v_j \in C$ with $j > i$ such that one of the conditions is not fulfilled. 

Suppose there is a vertex $v_k \in C$ with $k < i$ and $v_kv_i \notin E$. As $v_k \prec_{\sigma} v_i \prec_{\sigma} v_j$ and $ v_k v_j \in E $ and $ v_k v_i \notin E $, it follows from Lemma~\ref{lemma:mns_3point} that there must be a vertex $d$ such that $dv_i \in E$ but $dv_j \notin E$. Since $v_i \in I$ and $v_j \in C$, such a vertex cannot exist. Hence, we know that the first statement holds. 

Now, assume that the second condition does not hold and let $v_i$ be the $ \sigma $-leftmost vertex for which it fails. Thus, there is a neighbor $c$ of $v_i$ and a non-neighbor $b$ of $v_i$ such that $v_i \prec_{\sigma} b \prec_{\sigma} c$. Let $b$ the first non-neighbor of $v_i$ to the right of $v_i$ on $ P $. Again, we know, due to Lemma~\ref{lemma:mns_3point} and the choice of $b$, that there must be a vertex $d \prec_{\sigma} v_i$ with $db \in E$ and $dc \notin E$. Since $c \in C$, $d$ must be an element of $I$. Then, however, the second statement does not hold for $d$, since between $d$ and its neighbor $b$ the search has taken the non-neighbor $v_i$. This is a contradiction to the minimality of $v_i$. 
%
%Now assume $\sigma$ fulfills both conditions but is not an MNS-ordering. By Lemma~\ref{lemma:mns_3point} there are three vertices $a < b < c$ with $ac \in E$ and $ab \notin E$. If $a$ is an element of $C$, $b$ has to be an element of $I$ and does not fulfill the first condition. If $a$ is an element of $I$, $a$ does not fulfill the second condition. Thus, $\sigma$ has to be an MNS-ordering.
\end{proof}

\begin{theorem}\label{theorem:split_ltree}
A tree $T$ is an \cl-tree of MNS (MCS, LDFS, LBFS) on a split graph $G = (V, E)$ with clique $C$ and independent set $I$ if and only if:
\begin{enumerate}
\item $T$ is a caterpillar tree consisting of a set of leaves $ L $ and a dominating path $P = (v_1, \ldots, v_k)$ which contains every vertex of $C$.
\item It holds for every leaf $w \in L$ with a neighbor $ v_i $ in $T$ that $ wv_j \notin E(G) $ for $ j>i $.
\item It holds for every $v_i \in I$ that: 
\begin{enumerate}
\item $\{v_1, \ldots, v_{i-1}\} \cap C \subseteq N(v_i)$ with $|\{v_1, \ldots, v_{i-1}\} \cap C| = l$
\item $v_{i+1}, \ldots, v_{deg(v_i) - l} \subseteq N(v_i)$
\end{enumerate}
\end{enumerate}
\end{theorem}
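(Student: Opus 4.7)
My plan is to prove both directions of the equivalence using Lemma~\ref{lemma:split_order} as the main structural tool. I will argue directly for MNS; the forward direction then passes down to MCS, LBFS, and LDFS because each of these is a restriction of MNS, while the backward direction will be obtained for each of the four searches by adapting the leaf order in a common construction. For the forward direction, let $\sigma = (u_1, \ldots, u_n)$ be an MNS-ordering producing $T$ and list the clique vertices in the order $c_{(1)}, \ldots, c_{(m)}$ in which they appear in $\sigma$. Because $C$ is a clique, the rightmost $\sigma$-left $G$-neighbor of each $c_{(j)}$ with $j \geq 2$ is either $c_{(j-1)}$ or an $I$-vertex $a_j$ that appears $\sigma$-between $c_{(j-1)}$ and $c_{(j)}$ and is adjacent to $c_{(j)}$. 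I will then define $P$ as the $T$-path obtained by inserting $a_j$ right before $c_{(j)}$ whenever $a_j \neq c_{(j-1)}$, prepending $u_1$ if $u_1 \in I$; by construction $P$ is a $T$-path containing every vertex of $C$.

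To obtain condition~1, I will argue that every vertex $w \notin V(P)$ lies in $I$ and has no $T$-child, since any such child would be a $C$-vertex $c_{(j)}$ with $a_j = w$, which would force $w \in V(P)$. Hence $T$ is a caterpillar with spine $P$. Condition~3 for any $I$-vertex $v_i$ on $P$ follows from the two parts of Lemma~\ref{lemma:split_order} applied at the $\sigma$-position of $v_i$, provided that $P$-positions coincide with $\sigma$-positions up to a constant offset; this is guaranteed because Lemma~\ref{lemma:split_order}(2) forces a contiguous block of $C$-vertices to appear in $\sigma$ immediately after each $a_j$, so no further $I$-vertex is hidden between consecutive $C$-vertices of $P$. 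For condition~2, take a leaf $w$ with $T$-parent $c_{(j)}$ and assume $w \sim c_{(j')}$ for some $j' > j$: if $w \succ_\sigma c_{(m)}$ then $c_{(j')}$ is a later $\sigma$-left $G$-neighbor of $w$ than $c_{(j)}$, contradicting the choice of $T$-parent; otherwise Lemma~\ref{lemma:split_order}(2) applied to $w$ makes the next vertex in $\sigma$ after $w$ a $C$-neighbor of $w$, which would force $w$ to be the $T$-parent of that later $C$-vertex and hence place $w$ on $P$.

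For the backward direction, assume conditions 1--3 hold and set $\sigma := (v_1, \ldots, v_k, u_{k+1}, \ldots, u_n)$, where $v_1, \ldots, v_k$ is the spine $P$ and the leaves of $L$ appear afterwards in an order chosen according to the intended search. The \cl-tree of $\sigma$ equals $T$ regardless of the leaf ordering: for each $1 \leq j < k$ the rightmost $\sigma$-left $G$-neighbor of $v_{j+1}$ is $v_j$ because $v_j v_{j+1} \in E(T) \subseteq E(G)$, and for each leaf $w$ with $T$-parent $v_i$, condition~2 guarantees that no $v_{i'}$ with $i' > i$ is adjacent to $w$ while the leaves placed before $w$ lie in $I$ and are non-adjacent to $w$. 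To verify that $\sigma$ is a valid MNS-ordering at each spine step $j+1 \leq k$, the critical case is an unchosen $C$-vertex $v_{j'}$ witnessing an extra $I$-neighbor $v_m$ on $P$ not adjacent to $v_{j+1}$; condition~3(b) applied to $v_m$ then forces the integer $j'$ to satisfy both $j' > j$ and $j' \leq m + \deg(v_m) - l < j + 1$, which is impossible.

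The main obstacle will be extending this backward construction uniformly to MCS, LBFS, and LDFS. My plan is to keep the spine prefix fixed and adapt only the leaf ordering: decreasing degree for MCS, lex-largest partial labels for LBFS, and decreasing $T$-parent position along $P$ for LDFS. The spine analysis transfers because the containment argument for MNS in fact shows that $v_{j+1}$'s set of chosen neighbors is a superset of every competitor's, which yields size dominance for MCS, lex dominance for LBFS (the earliest differing coordinate favors $v_{j+1}$), and, since $v_j \sim v_{j+1}$, a maximum first-prepended element equal to $j$ for LDFS. Since every paradigm allows arbitrary tie-breaking, the spine order is admissible in each case, and once every spine vertex has been chosen the leaf labels are frozen, so the search-specific leaf orderings trivially satisfy their respective maximality criteria.
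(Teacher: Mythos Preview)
Your proposal is correct and lands close to the paper's argument, with a genuine difference in how the forward direction is organized. The paper establishes the caterpillar structure by contradiction (assuming a vertex with two non-leaf children and running a case analysis on which of them lie in $I$), then separately argues that one may assume the search starts at an endpoint of the spine, and only afterwards reads off conditions~2 and~3. You instead \emph{construct} the spine $P$ directly from $\sigma$: thread the clique vertices $c_{(1)},\dots,c_{(m)}$ in their $\sigma$-order and insert the $I$-vertex $a_j$ whenever it is the $T$-parent of $c_{(j)}$, then observe that every remaining vertex is necessarily a $T$-leaf. This constructive route is cleaner and sidesteps the root-location discussion entirely. The backward directions are essentially the same; your observation that the chosen-neighbor set of $v_{j+1}$ is a superset of that of every competitor is exactly what the paper's two-case analysis ($v_{j+1}\in C$ versus $v_{j+1}\in I$) establishes, and packaging it as set containment makes the extension to MCS, LBFS and LDFS transparent.

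Two small points to tighten. First, the phrase ``$P$-positions coincide with $\sigma$-positions up to a constant offset'' is not literally true: leaves in $I$ may sit in $\sigma$ strictly between $c_{(j-1)}$ and $a_j$, so the offset is not constant. What you actually need, and what your next sentence correctly supplies, is the local fact that the $d:=\deg(a_j)-\ell$ vertices immediately following $a_j$ in $\sigma$ are all in $C$ (by Lemma~\ref{lemma:split_order}), hence equal $c_{(j)},\dots,c_{(j+d-1)}$, and these are also the next $d$ vertices of $P$ since no $a_{j'}$ is inserted for $j<j'\le j+d-1$. Second, for the LDFS leaf ordering, ``decreasing $T$-parent position'' only fixes the first label coordinate; ties can occur. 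It suffices to say, as the paper does, that once the spine is exhausted all leaf labels are frozen, so any continuation respecting the relevant search paradigm works.
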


\begin{proof}
First, we show that the three conditions stated in the theorem are necessary. Assume that $T$ is an \cl-tree of one of the searches, but not a caterpillar tree. We assume, that the tree is rooted in the starting vertex of the search. If $ T $ is not a caterpillar, there exists a vertex $v$ that has two children $ u $ and $ w $ in $ T $ which, in turn, also have two children $ u' $ and $ w' $, respectively. We now show that $uw \notin E$: It is clear that we have to take $u$ and $w$ after $v$, as $ T $ is rooted in the starting vertex of the search. Without loss of generality, we assume that $u$ has been visited first. If $u$ is adjacent to $w$, then the edge $vw$ cannot be part of the tree. Thus, we know that at least one of $u$ and $w$ has to be in $I$ and that $ v \in C $. 

We first assume that $u \in I$ and $w \in C$ and, as a result, $ u' \in C $. By Lemma~\ref{lemma:split_order} it follows that $u$ and $u'$ have to be taken before $w$. Since $u'w \in E$, the edge $vw$ cannot be part of the tree.

Let us now assume that $u$ and $w$ are elements of $I$ and $u$ is to the left of $w$. Then $ u', w' \in C $ and $u'$ has to be taken before $w$ by Lemma~\ref{lemma:split_order}. Since $w$ must be to the left of $w'$, the vertex $u'$ has to be a neighbor of $w$. Thus, the edge $vw$ was not inserted into the tree. Therefore, it follows that such a vertex $v$ cannot exist and the tree must be a caterpillar tree.

For the first statement, it remains to show that each $v \in C$ is part of $P$. Assume, $v \in C$ is a leaf and on both sides of the neighbor of $v$ on $ P $ there is a vertex of $ C $. Let $w$ be the neighbor of $v$ in the tree and let $u$ be the right neighbor of $w$ in $P$. Then using the same argumentation as above $vu \notin E$ and thus $u$ is an element of $I$. Furthermore, $u$ must be to the left of $v$ in $\sigma$. Since $u$ has at least one further neighbor in $T$, this neighbor is adjacent to $v$ and to the left of $v$, due to Lemma~\ref{lemma:split_order}. Thus, the edge $vw$ cannot be an element of $T$.

For the second and third conditions we first show that, without loss of generality, the starting vertex of the search can be assumed to be $ v_1 $. 

Suppose, that the search begins in $ v_i \in P $ with $ 1 < i < k $ and that there are two vertices $ v_l, v_j  \in C$ with $ l < i < j $; let these be leftmost, respectively, rightmost with this property. Furthermore, we assume, without loss of generality, that $ v_l $ is visited before $ v_j $. Let $ x $ be the predecessor of  $ v_j $ on $ P $. If $ x = v_i $, we have a contradiction to $ T $ being an \cl-tree, as $ v_l $ was visited after $ v_i $. Therefore, let $ y $ be the predecessor of $ x $. Again, suppose that $ y= v_i $. The vertex $ v_l $ must have been visited before $ x $, as otherwise $ T $ cannot be an \cl-tree. Due to Lemma~\ref{lemma:split_order}, $ v_l $ must be adjacent to $ x $; this is, again, a contradiction to $ T $ being an \cl-tree. As $ I $ is an independent set, either $ x $ or $ y $ must be in $ C $; this is a contradiction to the choice of $ v_j $. As a result, we can assume, without loss of generality, that all vertices of $ C $ are to the right of the starting vertex in $ P $. If the starting vertex is an element of $ I $, then we see that it must be equal to $ v_1 $. If the starting vertex is an element of $ C $, then it is possible that exactly one vertex of $ I $ is to its left. This vertex must be a leaf in $ G $, as $ T $ is an \cl-tree and we can assume that it is in $ L $, without loss of generality.

If, on the other hand, the start vertex $ r $ is not in $ P $, then it must have a neighbor $ v_i \in P \cap C$ which is the second vertex of the search. Due to the above, we see that all other vertices of $ C $ can be assumed to be to the right of $ v_i $ in $ P $, and, therefore, there is a path $ P' $ fulfilling all the necessary conditions beginning in $ r $.

Hence, the second statement follows from the definition of $\cl$-trees and the third statement follows from Lemma~\ref{lemma:split_order}.

It remains to show that the three conditions are also sufficient. Suppose that we are given a tree $ T $, consisting of a path $ P=(v_1, \ldots , v_k) $ and a set of leaves $ L $, which satisfies all three properties; we then construct MNS, MCS, LDFS and LBFS orderings which generate the \cl-tree $ T $. We consider the ordering $\sigma = (v_1, \ldots, v_k, l_1, \ldots, l_r)$ with $l_i \in L$.

First, we show that all vertices of $ P $ can be visited consecutively in that order by all of the investigated searches. If all vertices of $ P $ are elements of the clique, then this is obvious, as all of these searches can visit a clique in the beginning of the search in an arbitrary order. Now suppose that $ \sigma_i=(v_1, \ldots , v_i) $ is a correct search of one of the given types. We show that $ v_{i+1} $ has maximum label at this point of the search.

If $ v_{i+1} $ is an element of $ C $, it is adjacent to all vertices of $ C \cap \{v_1, \ldots , v_i\} $. Furthermore, there cannot be a vertex $ w \in I\cap \{v_1, \ldots , v_i\}  $ that is not adjacent to $ v_{i+1} $, but to some other unvisited vertex, due to condition 3b). This implies that $ v_{i+1} $ has largest label for all these searches.

Suppose that $ v_{i+1} $ is an element of $ I $. Due to condition 3a), we see again that $ v_{i+1} $ is adjacent to all vertices of $ C \cap \{v_1, \ldots , v_i\} $. As a result of condition 3b), we see that there cannot be any unvisited vertex that is adjacent to a vertex from $ I \cap \{v_1, \ldots , v_i\} $. This implies that $ v_{i+1} $ has largest label for all these searches.

As soon as the path $ P $ has been completely visited, the order in which the remaining vertices of $ I $ are chosen does not have any impact on the resulting search tree, as all neighbors of these vertices have already been chosen. Therefore, we can visit these in any arbitrary ordering that adheres to the given search paradigm. Finally, due to condition~2, the tree resulting from this search coincides with $ T $. 
\end{proof}

The three conditions of Theorem~\ref{theorem:split_ltree} can be checked in linear time. As caterpillar trees are recognizable in linear time, it suffices to define the correct path $P$. To this end, we have to decide whether one of the endpoints must be a vertex from the independent set. Due to Theorem~\ref{theorem:split_ltree}, there can be at most one vertex from the independent set at one of the two ends which is not a leaf in $G$ and this vertex must be the start vertex of the search. If such a vertex is a leaf in $ G $, then we can assume that it is not contained in $ P $.

If $P$ does not begin in a vertex of $ I $, conditions~2) and~3) must be checked for both directions of $ P $. It is easy to decide the second condition by simply checking the indices of the neighbors of vertices in $ L $. To check the third condition, we first place the vertices of $C$ in a separate list according to their ordering in $P$. Then, we mark the neighbors of $v$ for every vertex $v \in I \cap P$ and check, whether all vertices of $C$ that appear before $v$ in $P$ are neighbors of $v$. The remaining neighbors of $v$ must follow $v$ in $P$ directly. All of these operations can be done in $\mathcal{O}(deg(v))$, resulting in a combined running time of $ \mathcal{O}(|E|) $.

\begin{corollary}
	The \cl-tree problem of MNS, MCS, LBFS and LDFS can be solved in linear time.
\end{corollary}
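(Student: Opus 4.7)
The plan is to leverage the characterization from Theorem~\ref{theorem:split_ltree} and show that each of its three conditions can be verified in time $\mathcal{O}(n+m)$ on a split graph, together with the preprocessing needed to identify the candidate dominating path. Since a split partition of $G$ into $C$ and $I$ can be computed in linear time (split graphs being recognizable in linear time via, e.g., a degree sequence argument), we may assume that the clique $C$ and the independent set $I$ are given at no extra cost.

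First, I would check in linear time whether $T$ is a caterpillar tree; if not, the algorithm rejects. Otherwise, let $P$ be the (essentially unique) dominating path obtained by deleting the leaves of $T$ once. To fix the orientation and decide the start vertex of the search, I would inspect the two endpoints of $P$ and the leaves attached to them: by the paragraph preceding the corollary, any endpoint of $P$ that lies in $I$ and is not a leaf of $G$ is forced to be the root, so there is at most one such endpoint, and if none exists the two possible orientations of $P$ must both be tried. Leaves of $G$ that dangle at an end of $P$ can equivalently be absorbed into $L$, so we may always reduce to a canonical choice of $P=(v_1,\dots,v_k)$. In each case I also verify that $C \subseteq V(P)$, which is part of condition~1.

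Condition~2 is checked by a single scan: for every leaf $w\in L$ with tree-neighbor $v_i$, iterate through $N_G(w)$ and confirm that no neighbor lies to the right of $v_i$ in $P$. Using a precomputed index array $\mathrm{pos}\colon V\to\{1,\dots,k\}\cup\{\infty\}$ giving the position of each vertex on $P$ (and $\infty$ for leaves of $T$), this costs $\mathcal{O}(\deg_G(w))$ per leaf, hence $\mathcal{O}(m)$ overall. Condition~3 is verified analogously: for every $v_i\in I\cap V(P)$, I mark the vertices of $N_G(v_i)$, then walk through the sorted list of clique vertices to check condition~3a (every $v_j\in C$ with $j<i$ is marked), and finally confirm condition~3b by verifying that the next $\deg_G(v_i)-|\{v_j\in C : j<i\}|$ positions of $P$ after $v_i$ are all marked neighbors. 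Unmarking afterwards keeps each $v_i$ at cost $\mathcal{O}(\deg_G(v_i))$, summing to $\mathcal{O}(m)$.

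The only genuine subtlety is the orientation/root choice, but since there are at most two viable orientations of $P$ and at most one forced root in $I$, running the above checks twice still yields a total running time of $\mathcal{O}(n+m)$. If all three conditions hold for some orientation, Theorem~\ref{theorem:split_ltree} guarantees that $T$ is an \cl-tree for MNS, MCS, LBFS, and LDFS; otherwise it is not. This establishes the claimed linear-time algorithm.
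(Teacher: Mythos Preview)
Your argument is correct and follows essentially the same route as the paper: recognize the caterpillar, extract the dominating path and resolve its orientation via the at-most-one forced $I$-endpoint, then verify conditions~2 and~3 by scanning neighborhoods with a position index and a mark/unmark pass, each in $\mathcal{O}(\deg)$ per vertex. The only additions you make over the paper's discussion are the explicit position array and the remark that a split partition is computable in linear time, neither of which changes the approach.
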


	% !TeX spellcheck = en_US

\section{Conclusion}

We have shown that the $\cf$-tree problem is \NP-complete for LBFS, MCS and MNS. Furthermore, we have given polynomial time algorithms for the $\cl$-tree problem of LDFS and for both the $\cf$-tree and the $\cl$-tree problems of LBFS, LDFS, MCS and MNS on split graphs. To the best of our knowledge, no hardness results for the $\cl$-tree problem were known before. Thus, the question arises whether the $\cl$-tree recognition problem is easy in general for every graph search.

For the end-vertex problem, there are polynomial algorithms for some chordal graph classes besides split graphs (cf.~\cite{beisegel2018end-vertex,charbit2014influence,corneil2010end}). Can these results be transferred to the tree-recognition problem? Up to now, there is no known combination of graph class and search for which the end-vertex problem is easy but the tree-recognition problem is hard.

Moreover, we have considered the search tree recognition problem for labeled, unrooted trees in this paper. %However, there are other possibilities of formulating this problem.
As a variant of this problem, one could fix the starting vertex of the search, i.e., the input would be a rooted search tree. As we have already seen in Section~\ref{sec:ldfs}, if we can solve the problem with a fixed start vertex in polynomial time, we can also solve the general problem efficiently by solving it for every vertex as the starting point of the search. Nevertheless, it could be possible that the problem without fixed starting vertex is easier than the problem with fixed start vertex. That is, maybe it is easy to find a search order with arbitrary root, that generates the tree, but it is \NP-hard to find one that uses the given root.

As a second variant, one can also consider the unlabeled problem, i.e., no spanning tree is given, but a tree with a matching number of vertices. Thus, we are looking for a search tree which is isomorphic to the given tree. Obviously, this problem is \NP-hard for \cl-trees of DFS, since it includes the hamiltonian path problem. However, it remains open whether there are searches and graph classes where the unlabeled case is easy or even easier than the labeled one. 

In the literature, spanning trees with special properties and corresponding optimization problems are well studied. Examples are the maximum leaf spanning tree problem~\cite{garey2002computers} and distance approximating spanning trees~\cite{prisner1997distance}. Are there graph classes where search trees of the investigated graph searches solve or at lead to an approximate solution of such problems?

	\bibliographystyle{plain}
	\bibliography{tree-recognition}

\end{document}